\documentclass[10pt,letterpaper]{article}
\usepackage{amsfonts}
\usepackage{amsmath}
\usepackage{amsthm}
\usepackage{amssymb}
\usepackage{framed}
\usepackage[noend]{algpseudocode}
\usepackage{xcolor}
\usepackage{multirow}
\usepackage{float}
\usepackage{centernot}
\usepackage{graphicx}
\usepackage{subcaption}
\usepackage{fullpage}

\makeatletter
\newcommand{\xMapsto}[2][]{\ext@arrow 0599{\Mapstofill@}{#1}{#2}}
\def\Mapstofill@{\arrowfill@{\Mapstochar\Relbar}\Relbar\Rightarrow}
\makeatother

\newcommand{\vir}[1]{``{#1''}}

%
\newtheoremstyle{mystyle}{}{}{\itshape}{}{\bfseries}{.}{6 pt}{\thmname{#1}\thmnumber{ #2}\thmnote{ {\bfseries(#3)}}}
\theoremstyle{mystyle} 
\newtheorem{theorem}{Theorem}[section]
\newtheorem{corollary}{Corollary}

\newtheorem{lemma}{Lemma}

\newtheorem{definition}{Definition}

\title{On the Power of Weaker Pairwise Interaction:\newline
Fault-Tolerant   Simulation of Population Protocols}
%

\author{G. Di Luna\footnote{ University of Ottawa, \texttt{ \{gdiluna,flocchin\}@site.uottawa.ca, viglietta@gmail.com} }$^*$, P. Flocchini$^*$, T. Izumi\footnote{Nagoya Institute of Technology, Gokiso-cho, Showa-ku, Nagoya, Aichi, 466-8555, Japan \texttt{t-izumi@nitech.ac.jp}} , T. Izumi\footnote{ Ritsumeikan University \texttt{izumi-t@fc.ritsumei.ac.jp}} , N. Santoro\footnote{School of Computer Science, Carleton University,  \texttt{santoro@scs.carleton.ca}}, G. Viglietta$^*$ }

%
%


\date{}

\begin{document}

\maketitle


\begin{abstract}
In this paper we investigate the computational power of Population Protocols (PP) under some unreliable and/or weaker interaction models. More precisely, we focus on two features related to the power of interactions: omission failures and one-way communications. An omission failure, a notion that this paper introduces for the first time in the context of PP, is the loss by one or both parties of the information transmitted in an interaction. The failure may or may not be detected by either party. On the other hand, in one-way models, communication happens only in one direction: only one of the two agents can change its state depending on both agents' states, and the other agent may or may not be aware of the interaction. These notions can be combined, obtaining one-way protocols with (possibly detectable) omission failures. 

A general question is what additional power is necessary and sufficient to completely overcome the weakness of one-way protocols and enable them to simulate two-way protocols, with and without omission failures. As a basic feature, a simulator needs to implement an atomic communication of states between two agents; this task is further complicated by the anonymity of the agents, their lack of knowledge of the system, and the limited amount of memory that they may have. 

We provide the first answers to these questions by presenting and analyzing several simulators, i.e., wrapper protocols converting any protocol for the standard 
two-way model into one running on a weaker one. 
\end{abstract}
\newpage


\section{Introduction}

\subsection{Framework}

The {\em Population Protocol} ({\sf PP})  model~\cite{first} is a  mathematical model that describes 
 systems of simple mobile computational entities, called \emph{agents}.
Two agents  can interact (i.e., exchange information) only when their movement brings them  into communication range of each other; however, 
  the  movements of the agents, and thus the occurrences of their interactions,  are completely unpredictable, a condition called ``passive mobility''. Such would be, for example, the case of a flock of birds, each provided with a sensor; the resulting passively mobile sensor network
can then be used for monitoring the activities of the flock and  for individual intervention, such as a sensor  inoculating  the bird  with a drug,
should a certain condition be detected.

In {\sf PP}, when an interaction occurs, the states of the two agents involved  change according to a set of  deterministic rules, or ``protocol''.   The execution of the protocol,  through the interactions  originating from the movements of the entities, 
generates a non-deterministic sequence  of   changes in the states of the entities themselves and, thus in  the global state of the system. 

In an interaction, communication is generally assumed to be bidirectional or {\em two-way}: each agent of a pair  receives  the state of the other agent and applies the protocol's transition function to update its own state,  based on the received information and its current state. From an engineering standpoint, this round-trip communication between two interacting agents may be difficult to implement. Moreover, the standard {\sf PP} model is not resilient to faults.

In this paper we investigate the computational power of {\sf PP} under some unreliable and/or weaker interaction models. More precisely, we focus on two features related to the power of interactions: \emph{omission failures} and \emph{one-way} communications. An omission failure, a notion that this paper introduces for the first time in the context of {\sf PP}, is the loss by one or both parties of the information transmitted in an interaction. The failure may or may not be \emph{detected} by either party. On the other hand, in one-way models (originally introduced in~\cite{oneway}),   communication occurs only in one direction: only one of the two agents can change its state depending on both states, and the other agent may or may not be aware of the interaction. These notions can be combined, obtaining one-way protocols with (possibly detectable) omission failures.

A general question is what additional power  is necessary and sufficient to fill the gap between the standard two-way model
and the weaker models stated above. In this paper we start to address this question, using  as a main investigation tool
 the  concept of a {\em simulator}:  a wrapper protocol converting any protocol for the standard 
two-way model into  one running on some weaker model. A simulator provides an interface between the simulated protocol 
and the physical communication layer, giving the system the illusion of being in a two-way environment. 
As a basic feature, a simulator has to implement an atomic communication of states between two agents,
always guaranteeing both safety and liveness of any problem specification;
this task is further complicated by the anonymity of the agents, their lack of knowledge of the system, and the limited amount of memory that they may have.

\subsection{Main Contributions}

We consider 
  the computationally distinct
models that arise  
from the introduction of omission faults and/or 
one-way behavior in two-way  protocols (see  Figure \ref{figure:modcompress}).
In particular, {\sf TW} refers to two-way protocols without omissions; {\sf IT} and {\sf IO} refer to the one-way models {\em Immediate Transmission} and {\em Immediate Observation}, introduced in \cite{oneway};  the {\sf T$_i$}'s  and {\sf I$_i$}'s refer to the distinct two-way and one-way model with omissions,  respectively.
  
  We consider 
 two main types of  omission adversaries: a ``malignant'' one, called {\bf UO}, who can insert omissions at any point in the execution, and a ``benign'' one, called $\diamond${\bf NO}, who must eventually stop inserting omissions. Interactions are otherwise ``globally fair''. Interestingly, all our main simulators work even under the malignant {\bf UO} adversary, while all our main impossibility results hold even under the benign $\diamond${\bf NO} adversary.
  
We start by analyzing the negative impact that omissions have on computability.
 We show that, in the absence of additional assumptions, the
simulation of {\sf TW} protocols in the presence of omissions is impossible
even if the agents have infinite memory (Theorem~\ref{th:impoinfinite}).
Among other results, we also show that,
  in the two weak omission models {\sf I$_1$} and {\sf I$_2$}, simulation is impossible even under an extremely limited omission adversary, called $\diamond${\bf NO}$\mathbf{_1}$, which can only insert at most one omission in the entire execution.

  On the other hand we prove that,  in the weakest one-way model,  {\sf IO}, simulation is possible if the agents have unique IDs or the total number of agents, $n$, is known (Theorems~\ref{th2:simid1} and~\ref{th2:simN}).

  In the two strong omission models  {\sf I$_3$} and {\sf I$_4$}, simulation is possible when an upper bound on the number of omissions is known (Theorem~\ref{th2:simO}). This result in turn implies that,
in the non-omissive {\sf IT} model,
{\sf TW} simulation is possible
 with a memory overhead of ${\Theta}(\log n)$ bits 
 for each state of the simulated protocol (Corollary~\ref{obs:1}). 
  In light of the fact  that 
with constant memory, in absence of additional capabilities,  {\sf IT} protocols are strictly 
less powerful than two-way protocols \cite{oneway}, our results show that this computational gap can
be overcome by using
additional memory.

Our main results  are summarized in Figure~\ref{id:algorwq2}, where green 
blobs represent possibilities, and red 
blobs impossibilities. As a consequence of these results, we have a complete characterization
 of the feasibility of   simulation  with respect to infinite memory  and knowledge of the size of the system.

\begin{figure}
\makebox[\textwidth][c]{\hspace{-0.5cm}\includegraphics[scale=0.6705]{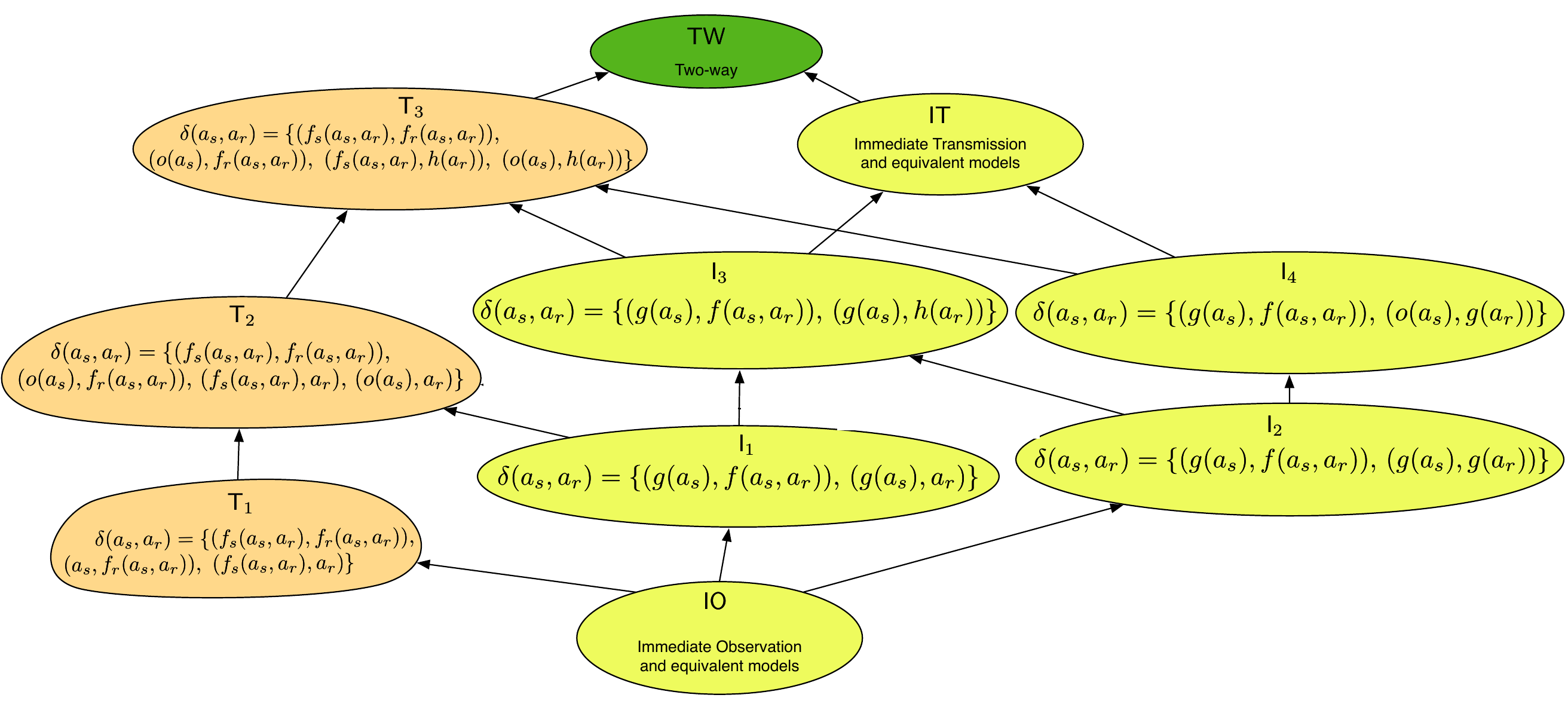}}
\caption{\footnotesize  Computational relationships between models. An arrow between two blobs indicates that the class of solvable problems in the source blob is included in that of the destination blob. The models on the left,  {\sf T$_1$}, {\sf T$_2$}, {\sf T$_3$}, are the two-way models with omissions. The models on the right, {\sf I$_1$}, {\sf I$_2$}, {\sf I$_3$}, {\sf I$_4$},
are the one-way models with omissions.  \label{figure:modcompress}}
\end{figure}

%
%
%
%


\subsection{Related Work}

Since their introduction, there have been  extensive  investigations  on Population Protocols (e.g., see  \cite{computability,first2,infinitepopulation,scheduler,popbook,probabilistic,popchecm,ids}), and the basic assumptions of the original model have been expanded in several directions, typically  to overcome inherent computability  restrictions. 
For example, allowing each agent to have non-constant memory~\cite{dalistar2,dalistar,passivemachine}; assuming the presence of  a leader \cite{fastleader}; allowing 
 a certain amount of information to be stored on the edges~\cite{mediated,mediated3,mediated2} of the interaction graph.

 
The  issue of dependable computations in {\sf PP}, first raised in   \cite{firstfault}, has been considered and studied only with respect to processors' faults, and the basic model has necessarily been expanded.
In \cite{fault} it has been shown how to compute functions tolerating ${\cal O}(1)$ crash-stops and transient failures,
assuming that  the number of failures is bounded and known.
In \cite{majbiz}  the specific majority problem under ${\cal O}(\sqrt{n})$ Byzantine failures,  
 assuming a fair probabilistic scheduler, has been studied. 
In \cite{bizfault} unique IDs are assumed, and it is shown how to compute functions tolerating a bounded number of Byzantine faults, under the assumption that Byzantine agents cannot forge IDs. 
Self-stabilizing solutions have been devised for specific problems such as 
 leader election (assuming knowledge of the system's size  and a non-constant number of states~\cite{izumile},
                       or assuming a leader detection oracle~\cite{leaderelection}) and
 counting  (assuming the presence of  a   leader   \cite{spaceoptcounting}). 
 Moreover,  in \cite{Beauquier20114247}  a self-stabilizing transformer for general protocols has been studied in a slightly different model and under the assumption of unbounded memory and a leader. 

Finally, to the best of our knowledge,  the one-way model without omissions, has been studied   only in \cite{oneway}, where it is shown that {\sf IT} and {\sf IO}, when equipped with constant memory, can compute a set of functions that is strictly included in that of {\sf TW}.   Combined with our results in Figure~\ref{id:algorwq2}, this implies that, without using extra resources (e.g., infinite memory, leader, etc.), simulations are impossible in all the one-way and omissive models.


%

\section{Models and Terminology}
\subsection{Population Protocols}
We consider a system consisting of a set $A = \{a_{1},\ldots,a_{n}\}$ of mobile agents. The mobility is passive, in the sense that it is decided by an external entity. When two agents meet, they interact with each other and perform some local computation. We always assume that interactions are instantaneous. Each interaction
is asymmetric, that is, an interaction between $a_s$ and $a_r$ is indicated by the ordered pair $i=(a_s, a_r)$, where $a_s$ and $a_r$ are called \emph{starter} and 
\emph{reactor}, respectively. A protocol $\mathcal{P}$ is defined by the following 
three elements: a set of local states $Q_{\mathcal{P}}$, a set of initial states $Q'_{{\cal P}} \subseteq {Q}_{\mathcal{P}}$, 
and a transition function $\delta_{\cal P}\colon Q_{{\cal P}} \times Q_{{\cal P}} \rightarrow 
Q_{{\cal P}} \times Q_{{\cal P}}$. The function $\delta_{\cal P}$ 
defines the states of the two interacting agents at the end of their local 
computation. With a small abuse of notation, and when no ambiguity arises, we will use the same literal (e.g., 
$a_i$) to indicate both an agent and its internal 
state. Since the static structure of the system is uniquely determined by 
$\mathcal{P}$ and $n$, we refer to it as the \emph{system} $(\mathcal{P}, n)$. 
A \emph{configuration} $C$ of a system $(\mathcal{P},n)$ is the $n$-tuple of local 
states in $Q_{\mathcal{P}}$ (i.e., $C \in Q_{\mathcal{P}}^n$).

Given an $k$-tuple 
$t = (x_0,x_1,\ldots,x_{k-1})$ we denote the element $x_j$ by $t[j]$. 
\medskip

 \noindent {\bf Initial Knowledge.} 
 To empower the agents, we sometimes assume that each agent has some additional knowledge, such as unique IDs and/or knowledge of $n$.  We model this information by encoding it as a set of initial states of the agents (i.e., in $Q'_{{\cal P}}$). 
\medskip

\noindent {\bf Executions and Fairness.} 
Whenever an interaction $i=(a_j,a_k)$ turns a configuration of the form
$C = (a_1,\ldots,a_j,\ldots,a_k,\ldots,a_n)$ into one of the form $$C' = (a_1,\ldots,\delta(a_j,a_k)[0],
\ldots,\delta(a_j,a_k)[1],\ldots,a_n),$$ we use the notation
$C \xrightarrow{i} C'$. 
A \emph{run} of ${\cal P}$ is an infinite sequence of interactions 
$I = (i_0,i_1,\ldots)$. Given an initial 
configuration $C_{0}\in Q'^n_{{\cal P}}$, each run $I$ induces an infinite sequence of configurations, 
$\Gamma_{I}(C_0) =(C_{0}, C_{1},\dots)$ such that $C_{j} \xrightarrow{i_{j}} C_{j+1}$ for every $j \geq 0$, which is called an \emph{execution}
of $\mathcal{P}$.

We say that a set of configurations $\mathcal C \subseteq Q^n_{\mathcal{P}}$ is \emph{closed} if, for every $C\in\mathcal C$, and for every configuration $\widehat C$ obtained by permuting the states of the agents of $C$, also $\widehat C\in\mathcal C$.

An execution $\Gamma$ is \emph{globally fair} (\textsf{GF}) if it satisfies the following condition: for every two (possibly infinite) closed sets of configurations $\mathcal C, \mathcal C' \subseteq Q^n_{\mathcal{P}}$ such that for every $C\in \mathcal C$ there exists an interaction $i$ and some $C'\in \mathcal C'$ such that $C \xrightarrow{i} C'$, if infinitely many configurations of $\Gamma$ belong to $\mathcal C$, then infinitely many configurations of $\Gamma$ belong to $\mathcal C'$ (although not necessarily appearing in $\Gamma$ as immediate successors of configurations of $\mathcal C$).

Note that our definition of global fairness extends the standard one, which only deals with single configurations, as opposed to sets (see~\cite{first2}). The two definitions are equivalent when applied to protocols that use only finitely many states, but our extension also works with infinitely many states, while the standard one is ineffective.
\subsection{Interaction Models}

In this paper we consider three main models of interactions: the standard {\em Two-Way} one, and two one-way models presented in \cite{oneway}, i.e., the
\emph{Immediate Transmission} model and the \emph{Immediate Observation} model.

\smallskip

\noindent {\bf Two-Way Model (}\textsf{TW}{\bf ).} In this model, any protocol 
$\mathcal{P}$ must have a state transition function consisting of
two functions $f_s \colon Q_{\mathcal{P}} \times Q_{\mathcal{P}} \to Q_{\mathcal{P}}$ and 
$f_r\colon Q_{\mathcal{P}} \times Q_{\mathcal{P}} \to Q_{\mathcal{P}}$ satisfying
$\delta_{\mathcal{P}}(a_s,a_r) = (f_s(a_s,a_r), f_r(a_s,a_r))$.
\smallskip

\noindent {\bf Immediate Transmission Model (}\textsf{IT}{\bf ).} Any protocol 
$\mathcal{P}$ must have a state transition function consisting of 
two functions $g \colon Q_{\mathcal{P}} \to Q_{\mathcal{P}}$ and 
$f\colon Q_{\mathcal{P}} \times Q_{\mathcal{P}} \to Q_{\mathcal{P}}$ satisfying
$\delta_{\mathcal{P}}(a_s,a_r) = (g(a_s), f(a_s,a_r))$ for any $a_s, a_r \in Q_{\mathcal{P}}$.
\smallskip

\noindent {\bf Immediate Observation Model (}\textsf{IO}{\bf ).} Any protocol 
$\mathcal{P}$ must have a state transition function of the form 
$\delta_{\mathcal{P}}(a_s, a_r) = (a_s, f(a_s,a_r))$.
\smallskip

Note that, in the {\sf IT} model, the starter explicitly detects the interaction, as it applies function $g$ to its own state. In other terms, even if the starter cannot read the state of the reactor, it can still detect its ``proximity''. In the {\sf IO} model, on the other hand, there is no such detection of an interaction (or proximity) by the starter.

%
%


\subsection{Omissive Models}
An omission is a fault affecting a single interaction. In an omissive interaction an agent does not receive any information about the state of its counterpart. Omissions are introduced by an adversarial entity. We consider:
\begin{definition}[Unfair Omissive (UO) Adversary] The {\bf UO} adversary takes a run $I$ and outputs a new sequence $I'$, which is obtained by inserting a (possibly empty) finite sequence of omissive interactions between each pair of consecutive interactions of $I$. 
\end{definition}
\begin{definition}[Eventually Non-Omissive ($\diamond${\bf{NO}}/$\diamond${\bf NO}$\mathbf{_1}$) Adversary] The $\diamond${\bf NO} adversary takes a run $I$ and outputs a new sequence $I'$, which is obtained by inserting any finite sequence of omissive interactions between finitely many pairs of consecutive interactions of $I$. The $\diamond${\bf NO}$\mathbf{_1}$ adversary is even weaker, and can only output interaction sequences with at most one omission.
\end{definition}
\smallskip
If we incorporate omissions in our runs, then transition functions become more general relations.

\smallskip
\noindent {\bf {\sf TW} Omissive Model.}
In the two-way omissive model, we have the  transition relation $$\delta(a_s,a_r)=\{(f_s(a_s,a_r), f_r(a_s,a_r)), (o(a_s),f_r(a_s,a_r)),\ (f_s(a_s,a_r),h(a_r)),\ (o(a_s),h(a_r))\}$$
(model {\sf T$_3$}).
The first pair is the outcome of an interaction when no omission is present; the other three pairs represent all possible outcomes when there is an omission: respectively, an omission on the starter's side, on the reactor's side, and on both sides. The functions $o$ and $h$ represent the detection capabilities of each agent: in {\sf TW}, if one of these is the identity, then omissions are \emph{undetectable} on the respective side.

\smallskip
\noindent {\bf One-Way Omissive Models.}
In the case of one-way interactions, we have the  transition relation $\delta(a_s,a_r)=\{(g(a_s),f(a_s,a_r)),\ (o(a_s),h(a_r))\}.$ The first pair is the outcome of an interaction when no omission is present, and the second pair when there is an omission. (Note that the {\sf IO} model corresponds to the case in which $g$ is the identity function.) Once again, omissions are undetectable starter-side (respectively, reactor-side) if $o$ (respectively, $h$) is the identity function.

\smallskip
\noindent {\bf Hierarchy of Models.}
 The previous models can be weakened by removing the omission detection, either on the starter's side, or on the reactor's side. After identifying all possible combinations of omissions and detections, and pruning out the equivalent ones, the significant models and their relationships have been reported in Figure~\ref{figure:modcompress}. 
For {\sf TW} omissive models, in {\sf T$_2$} we have the models where there is no detection of omission either on the starter's or the reactor's side. Since these two models are symmetric, only the one without reactor-side detection is reported, i.e., function $h$ is forced to be the identity. In {\sf T$_1$} we have the weaker model where no detection is available, i.e., both $o$ and $h$ are the identity. In one-way models, function $g$ is applied when an agent detects the \emph{proximity} of another agent. However, this does not imply the detection of an omission: in {\sf I$_2$}, no agent detects an omission, but both detect the proximity of the other agent. 
 
Each arrow in Figure~\ref{figure:modcompress} indicates either  the obvious inclusion, that is, the transition relation of the source is a special case of the transition relation of the destination, or that the adversary can force the inclusion by avoiding omissions (this is the case with {\sf T$_3$} and {\sf TW}, for instance). Thus, arrows also indicate inclusions of the sets of problems that are solvable in the various models.

\subsection{Simulation of Two-Way Protocols}
\label{s:simulation}
In this section we define the \emph{two-way protocol simulator} (or ``simulator'' for short) and other 
related concepts. Given a two-way protocol $\mathcal P$, consider a protocol
$\mathcal{S}(\mathcal{P})$, whose set of local states is $ Q_{\mathcal{P}} \times Q_{\mathcal{S}}$, where $ Q_{\mathcal{P}}$ is the set of local states of $\mathcal P$ (the ``simulated states''), and $ Q_{\mathcal{S}}$ is additional memory space used in the simulation. Let
$\pi_{\mathcal{P}}\colon  Q_{\mathcal{P}} \times  Q_{\mathcal{S}} \to Q_{\mathcal{P}}$ be the projection function onto the set of local states of $\mathcal P$. By extension, if $C$ is a configuration of $\mathcal{S}(\mathcal{P})$, we write $\pi_\mathcal P(C)$ to indicate the configuration of $\mathcal P$ consisting of the projections of the states of the agents of $C$.

Given an execution $\Gamma_I(C_0)$ of $\mathcal{S}(\mathcal{P})$, where $I = (i_0, i_1, \dots)$, we say that $E(\Gamma) = (e_0, e_1, \dots)$ is a sequence of \emph{events} for $\Gamma$ if it is a weakly increasing sequence of indices of interactions of $I$, such that no three indices are the same, and containing at least the indices of the interactions that determine the update of the simulated state of some agent in the execution $\Gamma$ (if an interaction updates the simulated states of two agents, then its index must appear twice in $E(\Gamma)$). So, with each event $e_j$ in $E(\Gamma)$, we can associate a unique agent involved in the interaction $i_{e_j}$; preferably, this agent is one that effectively changes simulated state as a result of $i_{e_j}$. We also allow extra events in $E(\Gamma)$, associated with agents that do not change simulated state, because we want to take into account simulations of two-way protocols that occasionally leave the state of an agent unchanged.

If $\Gamma_I(C_0)=(C_0, C_1,\dots)$, we let $C^-_j = C_{e_j}$ and $C^+_j = C_{e_j+1}$. In other words, $C^-_j$ and $C^+_j$ are the configurations before and after the $j$-th update of the simulated state, respectively.
\begin{definition}[Perfect matching of events]\label{def:matching}
Given an execution of $\Gamma_I(C_0)$ of a run $I$ and a sequence of events $E(\Gamma)$, a \emph{perfect matching} 
$M(E)$ is a partition of $\mathbb N$ into ordered pairs (viewed as indices of events of $E(\Gamma)$) such that, if
$({e_j}, {e_k}) \in M(E)$, where ${e_j}$ is associated with agent $a_x$ and ${e_k}$ with agent $a_y$, then $x \neq y$ and
$$\delta_{\mathcal{P}}(\pi_{\mathcal{P}}(C^-_j[x]), \pi_{\mathcal{P}}(C^-_k[y])) = (\pi_{\mathcal{P}}(C^+_j[x]), \pi_{\mathcal{P}}(C^+_k[y])).$$
\end{definition}
Intuitively, a pair $(e_j, {e_k})$ in a perfect matching is the pair of
events representing the two state changes given by a two-way interaction of agents under the simulated protocol $\mathcal P$. The events ${e_j}$ and ${e_k}$ correspond to 
the updates of the simulated states of the starter and the reactor, respectively.  
A  matching $M(E)$ induces a \emph{derived run} $D$ of $\mathcal P$ as follows. Sort the pairs $({e_j}, {e_k})$ of $M(E)$ by increasing $\min\{e_j,e_k\}$, and let $M'$ be the sorted sequence. Now, if $({e'_j}, {e'_k})$ is the $m$-th element of $M'$, agent $a_x$ is associated with event ${e'_j}$ and agent $a_y$ is associated with event ${e'_k}$, then the $m$-th element of $D$ is $(x,y)$. Now, the \emph{derived execution} induced by $M(E)$ is simply the execution of $\mathcal P$ induced by $D$, i.e., $\Gamma_{D}(\pi_{\mathcal P}(C_0))$. 
\begin{definition}[Simulation]\label{def:simulator}
A protocol $\mathcal{S}(\mathcal{P})$ \emph{simulates}  $\mathcal{P}$ if, for any initial configuration $C_0$ of $n$ agents of $\mathcal{S}(\mathcal{P})$, and any run $I$ whose execution $\Gamma_I(C_0)$ satisfies the {\sf GF} condition, there exists a sequence of events $E(\Gamma)$ with a  perfect matching $M(E)$ whose derived execution is an execution of $n$ agents of $\mathcal P$ starting from the initial configuration $\pi_{\mathcal P}(C_0)$ and satisfying the {\sf GF} condition. We further require that, for each initial configuration $C_0$, every finite initial sequence of interactions of $\mathcal{S}(\mathcal{P})$ (possibly with omissions) can be extended to an infinite one $I$, having no additional omissions, whose execution $\Gamma_I(C_0)$ satisfies the {\sf GF} condition.
\end{definition}
The last clause of the definition has been added because, with infinite-memory protocols, the existence of {\sf GF} executions cannot be taken for granted.


\section{Impossibilities for  Simulation in Presence of Omissions \label{imp1}}

 In this section,   we derive  several impossibility results    in the presence of omissions. All our impossibility  proofs rely on the existence of a two-way protocol that cannot be  simulated. 
\begin{definition}[Pairing Problem]
A set of agents $A$ is given,  partitioned into  \emph{consumer} agents $A_c $,  starting in state   $c$, 
and \emph{producer} agents  $A_p$, starting in state $p$. 
We say that a protocol $\mathcal P$ solves the \emph{Pairing Problem (\textsf{Pair})} if it enforces the following properties:
\begin{itemize}
\item {\bf Irrevocability.} $\mathcal P$ has a state $cs$ that only agents in state $c$ can get; once an agent has state $cs$, its state cannot change any more.
\item {\bf Safety.} At any time, the number of agents in  state $cs$ is at most $|A_p|$. 
\item {\bf Liveness.} In all {\sf GF} executions of $\mathcal P$, eventually the number of agents in  state $cs$ is stably equal to $\min\{|A_c|,|A_p|\}$.
\end{itemize}

\end{definition}  

It is easy to see that      \textsf{Pair} can be solved by the  simple protocol below in the standard two-way model.

\begin{framed}
\hspace{-0.675cm}
\footnotesize
{\sf  Pairing  Protocol} $\mathcal{P}_{IP}$.
$Q_{\mathcal{P}_{IP}} = \{cs,c,p,\bot\}$.  The only non-trivial transition rules are $(c,p) \mapsto (cs,\bot)$ and $(p,c) \mapsto (\bot,cs)$.
\end{framed}

%
Let us now define a property on the behavior of a generic simulator ${\cal S(\mathcal P)}$ over a sequence of interactions $I$. We will later show how this property is related to the omission resilience of ${\cal S(\mathcal P)}$.  
\begin{definition}[Transition Time ({\sf TT})] Given a {\sf TW} protocol ${\cal P}$, a simulator ${\cal S(\mathcal P)}$, and an execution $\Gamma=(C_0,C_1,\ldots)$ of ${\cal S(\mathcal P)}$ on a system of two agents, the {\em Transition Time ({\sf TT})} of the triplet $({\cal S},{\cal P},\Gamma)$ is the smallest $t$ such that $\pi_\mathcal P(C_t[0])=\delta_\mathcal P(\pi_\mathcal P(C_0[0]),\pi_\mathcal P(C_0[1]))[0]$ and $\pi_\mathcal P(C_t[1])=\delta_\mathcal P(\pi_\mathcal P(C_0[0]),\pi_\mathcal P(C_0[1]))[1]$ (or $\infty$, if no such $t$ exists).
\end{definition}
Let $O(I)$ be the number of omissions in a sequence of interactions $I$.
\begin{definition}[Fastest Transition Time ({\sf FTT})] Given a {\sf TW} protocol ${\cal P}$, a simulator ${\cal S(\mathcal P)}$, and a configuration $C_0$ for a system of two agents of ${\cal S(\mathcal P)}$, the {\em Fastest Transition Time ({\sf FTT})} of the triplet $({\cal S},{\cal P},C_0)$ is the smallest {\sf TT} of all the triplets of the form $({\cal S},{\cal P},\Gamma_I)$, where $I$ ranges over all runs with $O(I)=0$ and $\Gamma_I[0]=C_0$.
\end{definition}

Intuitively, {\sf FTT} is the minimum number of (non-omissive) interactions needed by a specific simulator ${\cal S}$ to simulate one step of protocol ${\cal P}$ in a system of two agents. Thus it can be seen as the \vir{maximum speed} of a simulator. We will show in the following that such a metric is intrinsically related with the omission resilience of ${\cal S}$.

\subsection{Impossibilities in Spite of Infinite Memory \label{imp2:infmemory}}

In this section we show that   simulations of {\sf TW} models are impossible when omissions are present, even if the system is endowed with infinite memory. We start presenting a key indistinguishability argument.

\begin{lemma}\label{lemma:indstinguishability1}
Let ${\cal S(\mathcal P)}$ be a simulator working in the omission model {\sf T$_3$}.
Let $t>0$ be the {\sf FTT} of the triplet $({\cal S},{\cal P},C_0)$, where one agent in $C_0$ has simulated state $q_0$, the other agent has $q_1$ with $q_0\neq q_1$, and $\delta_\mathcal P(q_0,q_1)=(q'_0,q'_1)$ and $\delta_\mathcal P(q_1,q_0)=(q'_1,q'_0)$. Let $A$ be a system of $2t+2$ agents of ${\cal S(\mathcal P)}$, and let $B_0$ be an initial configuration of $A$ in which $t$ agents have simulated state $q_0$ and $t+2$ agents have $q_1$. Then, there exists a sequence of interactions $I^\ast$ of $A$ such that $\Gamma_{I^\ast}(B_0)$ is {\sf GF} and $O(I^\ast)=t$, with a sequence of events $E(\Gamma_{I^\ast}(B_0))$ in which at least $t+1$ events represent a transition of some agent from simulated state $q_1$ to $q'_1$.
\end{lemma}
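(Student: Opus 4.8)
The plan is to exploit the fact that the simulator $\mathcal{S}$, having {\sf FTT} equal to $t$, needs at least $t$ non-omissive interactions to carry out a simulated transition in a two-agent system, and to manufacture a larger configuration $B_0$ on which the adversary can "waste" the simulator's progress repeatedly. The guiding intuition is that the simulator cannot distinguish a genuine pair of interacting agents from a pair in which one member has been silently swapped by the adversary (via omissions that look, to each party, exactly like the interactions the simulator was expecting). Since a $q_1$-agent can never tell which $q_0$-agent it is talking to, and there are two extra $q_1$-agents to act as "fresh" partners, the adversary can coax $t+1$ separate $q_1$-to-$q_1'$ transitions while spending only $t$ omissions overall.

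First I would fix a fast run $I_{\mathrm{fast}}$ in the two-agent system $C_0$ witnessing the {\sf FTT}: an omission-free run of length (up to) $t$ after which the $q_0$-agent has simulated state $q_0'$ and the $q_1$-agent has $q_1'$. Call the two agents in $C_0$ the "template starter" (ending in $q_0'$) and "template reactor" (ending in $q_1'$). I want to replay this template inside $A$, but reusing the same $q_0$-agent against many different $q_1$-agents. The construction of $I^\ast$ proceeds in phases. In phase $i$ (for $i = 0, \dots, t$) I pick one of the $q_1$-agents that has not yet moved and have it play the role of the template reactor against a designated $q_0$-agent playing the template starter; the catch is that after phase $0$ the $q_0$-agent's internal ($Q_{\mathcal{S}}$) state no longer matches the template's starter state. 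To fix this, I use omissions: whenever the template calls for the $q_0$-agent to receive information from its partner, in phase $i>0$ I instead schedule an omissive interaction for the $q_0$-agent — by the {\sf T}$_3$ omissive transition relation, an omission on the reactor's (or starter's) side leaves that agent's state controlled by $h$ (resp.\ $o$), which it cannot distinguish from a legitimate interaction. More carefully: I want the $q_1$-agent to see a genuine non-omissive interaction with the $q_0$-agent (so its state evolves exactly as the template reactor's does, driving it to $q_1'$), while the $q_0$-agent instead experiences omissions that prevent its state from actually changing — this is where at most one omission per phase beyond the first is needed, giving $O(I^\ast) \le t$ total if I am careful to reuse phase $0$ without omissions and spend one omission per subsequent phase, or more generally bound the total by $t$. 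I would then verify that each of the $t+1$ phases produces one event representing a $q_1 \to q_1'$ transition of the phase-$i$ reactor, for a total of at least $t+1$ such events, and that the remaining (unused, or partially used) agents can be driven around afterwards so that the full interaction sequence, extended appropriately, yields a {\sf GF} execution.

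The main obstacle I anticipate is the bookkeeping around the omission budget: I must show the adversary can neutralize the $q_0$-agent's progress across $t$ successive replays of an $\le t$-step template while spending only $t$ omissions in total, not $t$ per phase. The key observation enabling this is that the template run has length at most $t$, and the very first phase can be run with zero omissions (it is literally the template); thereafter, I should argue that a single well-placed omission per phase suffices to "reset" the relevant agent — or, alternatively, restructure the phases so that one shared pool of $t$ omissions, interleaved cleverly with global-fairness-preserving filler interactions among the spare $q_1$-agents, drives the required $t+1$ transitions. A second, more technical point is confirming that these partial/omissive executions are legitimate prefixes extendable to {\sf GF} executions in the sense of Definition~\ref{def:simulator}; here I would invoke the last clause of that definition (every finite prefix, possibly with omissions, extends to an omission-free {\sf GF} continuation) to close the argument, and then check that along the constructed-plus-extended execution the counting of $q_1 \to q_1'$ events is unaffected by the tail.
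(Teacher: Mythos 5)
Your overall strategy --- using omissions to fool agents into performing more $q_1\to q'_1$ transitions than the simulated protocol's safety allows --- is the right one, but the specific construction has two flaws that I do not think can be repaired. First, you rely on the claim that in \textsf{T}$_3$ an agent experiencing an omission ``cannot distinguish [it] from a legitimate interaction''; this is backwards. \textsf{T}$_3$ is the model with \emph{full} omission detection: the functions $o$ and $h$ are chosen by the simulator designer precisely so that an agent knows when an omission has occurred, and the lemma is stated for \textsf{T}$_3$ exactly because it is the hardest case for the impossibility. So you cannot use omissions to ``freeze'' or silently reset the designated $q_0$-agent: an omissive interaction changes its state to $o(\cdot)$, and a well-designed simulator may react to the detected omission in an arbitrary way. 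Second, and more fundamentally, reusing a single $q_0$-agent as the template starter across $t+1$ phases cannot work: omissions only block what that agent \emph{receives}, not what it \emph{transmits}, so in every phase after the first the fresh $q_1$-agent observes a starter whose $Q_{\mathcal S}$-component no longer matches the template, and the indistinguishability with the two-agent \textsf{FTT} run breaks down. You also flag, but do not resolve, the budget problem: neutralizing the $q_0$-agent's reception throughout a $t$-step template replay could cost up to $t$ omissions per phase, i.e.\ on the order of $t^2$ in total, and the hoped-for ``single well-placed omission per phase'' is not justified.

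The paper's construction sidesteps both problems by giving each of the $t$ agents in state $q_0$ its \emph{own} private partner in state $q_1$: each pair $(a_{2k},a_{2k+1})$ replays, from scratch, a two-agent run $I_k$ that coincides with the \textsf{FTT} run $I$ for its first $k$ steps, has a single omissive (and detected) interaction at step $k$, and then continues omission-free to a \textsf{GF} extension, which must still eventually simulate a two-way transition by Definition~\ref{def:simulator}; this yields $t$ transitions $q_1\to q'_1$ at a cost of exactly one omission per pair. The $(t+1)$-st transition comes from a designated collector agent $a_{2t}$: the step-$k$ interaction of pair $k$ is ``redirected'' so that its non-omissive half goes to $a_{2t}$ (which therefore experiences exactly the omission-free run $I$, one step per pair, and transitions as well), while its omissive half is absorbed by a dump agent $a_{2t+1}$ so that $a_{2k+1}$ still perceives precisely the omission prescribed by $I_k$. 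I would suggest reworking your argument around this ``many private pairs plus one collector'' idea rather than the ``one shared starter'' idea.
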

\begin{proof}
Intuitively, we construct a system with $t$ pairs of agents which, thanks to omissive interactions, we ``fool'' into believing that they are operating in a system of only two agents, until one agent per pair transitions from simulated state $q_1$ to $q'_1$. Then we have an extra agent that interacts once with one member of each of the $t$ pairs, also ``believing'' that the system consists of only two agents, which finally transitions from simulated state $q_1$ to $q'_1$. One last auxiliary agent serves as a ``generator'' of omissive interactions.

Let $I$ be any run of a system of two agents achieving {\sf FTT} for $({\cal S},{\cal P},C_0)$; let $d_0$ be the agent whose initial state is $q_0$ and let $d_1$ be the other one. For every $0\leq k<t$, we construct a sequence of interactions $I_k$ for two agents as follows: copy the first $k$ interactions from $I$; append an omissive interaction with the same starter as $I[k]$, and with omission (and detection) on $d_1$'s side; extend the resulting sequence to an infinite one whose execution from $C_0$ satisfies the {\sf GF} condition, without adding extra omissions (such an extension exists by Definition~\ref{def:simulator}, since $\mathcal S$ is a simulator). Note that $I_k$ has exactly one omissive interaction.

Because the execution of $I_k$ is {\sf GF}, the derived execution must also be {\sf GF} by definition of simulator, and in particular it makes the simulated states of the two agents transition according to $\delta_\mathcal P$ infinitely many times. Hence, the agent whose initial simulated state is $q_1$ will eventually transition to $q'_1$, say after the execution of the first $t_k$ interactions of $I_k$. Note that this happens regardless of which agent is the starter of the two-way simulated interaction, because by assumption $\delta_\mathcal P$ is symmetric on $(q_0,q_1)$.

  \begin{figure}[H]

    \begin{subfigure}{0.2\textwidth}
	\vspace{1.2cm}
	\makebox[\textwidth][c]{\includegraphics[scale=0.4]{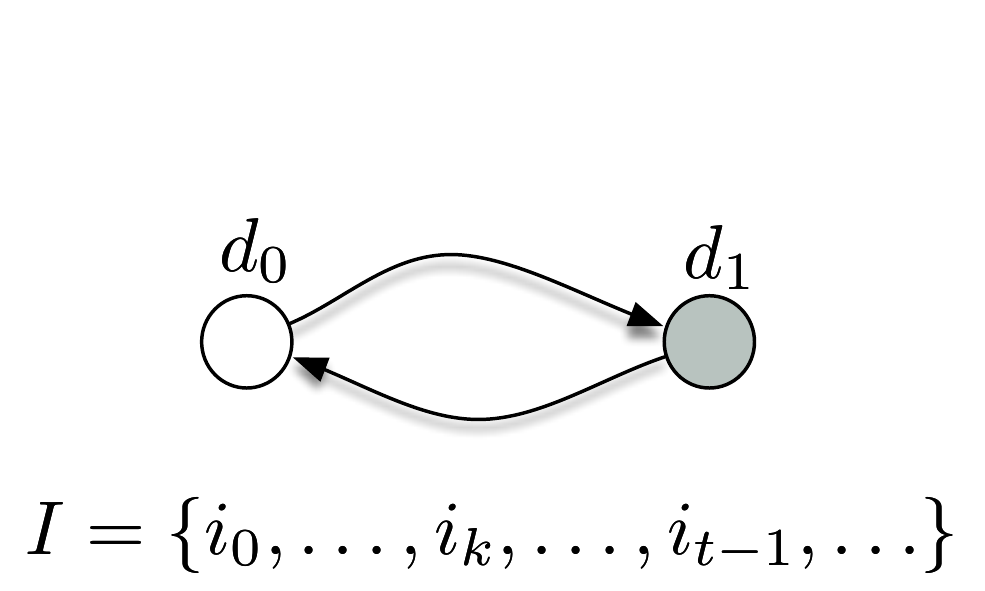}}

        \caption{Run $I$ achieving {\sf FTT} \label{figure:noom}}
    \end{subfigure}
    ~ ~ ~ ~ ~ 
    \begin{subfigure}{0.2\textwidth}
      \vspace{0.75cm}
       \includegraphics[scale=0.4]{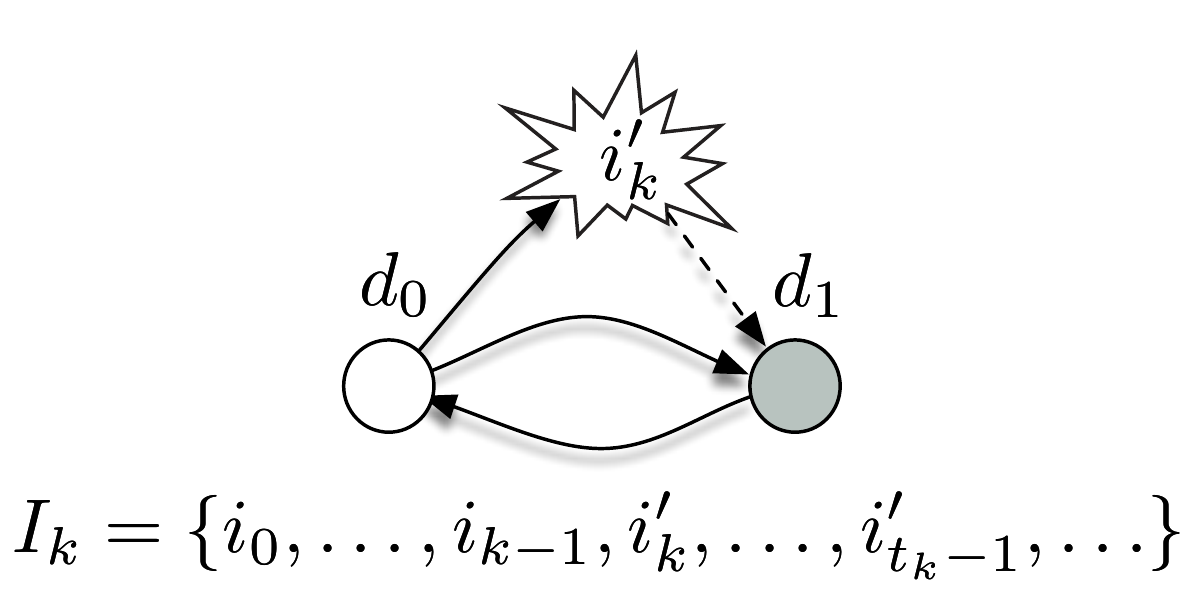}
        \caption{Partial run $I_k$ \label{figure:oneom}}
    \end{subfigure}
   ~ ~ ~ ~ ~ ~ ~ ~ ~
      \begin{subfigure}{0.3\textwidth}
      	\vspace{0.05cm}
    \includegraphics[height=3cm]{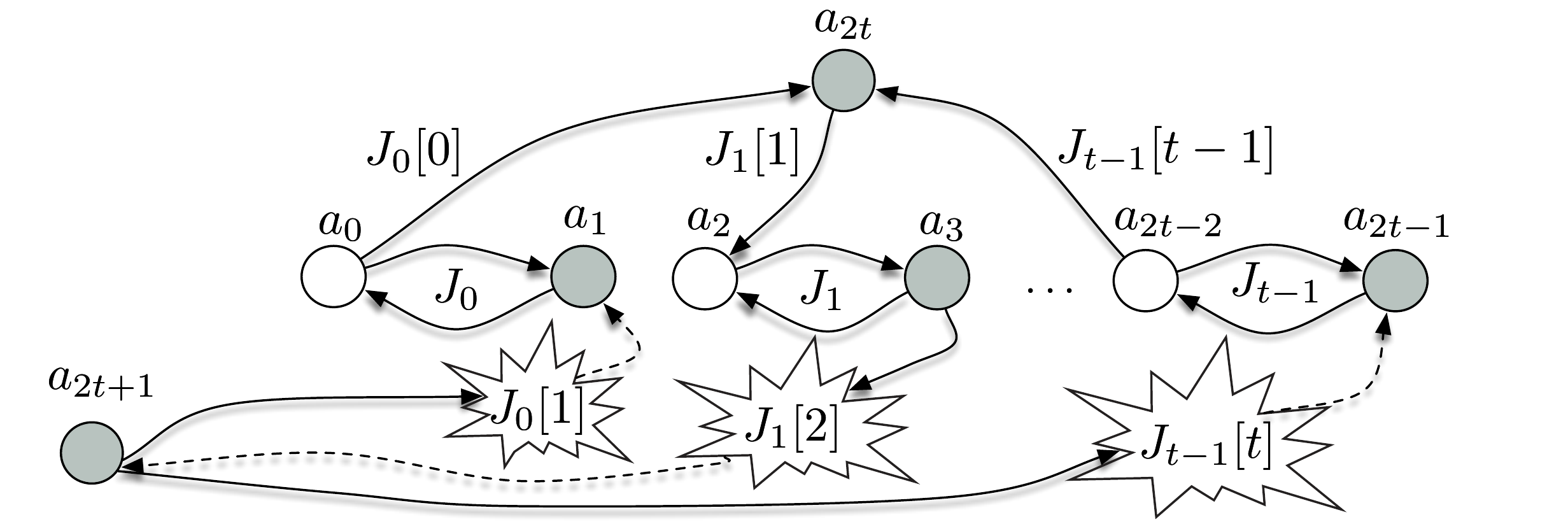}
        \caption{Final run $I^\ast$ \label{figure:impinfmemory2} }
    \end{subfigure}

      \caption{ Construction of the run $I^\ast$  \label{figure:impinfmemory1} }
\end{figure}
%
%
%

Now name the agents of $A$ as in $A=\{a_0,\dots,a_{2t+1}\}$, in such a way that, for all $0\leq k<t$, the agents of the form $a_{2k}$ have simulated state $q_0$ in $B_0$, while all other agents of $A$ have $q_1$. For every $0\leq k<t$, we construct a sequence $J_k$, consisting of $t_k+1$ interactions, involving only agents $a_{2k}$, $a_{2k+1}$, $a_{2t}$, and $a_{2t+1}$. We make $a_{2k}$ and $a_{2k+1}$ interact with each other as in $I_k$, but we ``redirect'' the omissive interaction $I_k[k]$ to $a_{2t}$ and $a_{2t+1}$. Specifically, we replicate the first $k$ interactions of $I_k$ (where $d_0$ becomes $a_{2k}$ and $d_1$ becomes $a_{2k+1}$); then we add an interaction between $a_{2k}$ and $a_{2t}$, where the role of $a_{2k}$ (i.e., starter or reactor) is the same as that of $d_0$ in $I_k[k]$; then we insert an omissive interaction between $a_{2k+1}$ and $a_{2t+1}$, where the role of $a_{2k+1}$ is the same as that of $b_1$ in $I_k[k]$, and the omission (and detection) is on $a_{2k+1}$'s side; finally, we replicate the $t_k-k-1$ interactions of $I_k$ from $I_k[k+1]$ to $I_k[t_{k}-1]$. Observe that $J_k$ contains exactly one omissive interaction, $J_k[k+1]$.

The final sequence $I^\ast$ is now simply the concatenation of all the sequences $J_k$ (where $k$ goes from $0$ to $t-1$, in increasing order), extended to an infinite sequence of interactions whose execution is {\sf GF}, and having exactly $t$ omissions in total (again, this extension exists by Definition~\ref{def:simulator}). 

Let us examine the execution of $I^\ast$ from the initial configuration $B_0$. Each of the $t$ pairs of the form $(a_{2k},a_{2k+1})$, with $0\leq k<t$, has an initial execution that is the same as that of $(d_0,d_1)$ interacting for $k$ turns as in $I_k$. Hence, the execution of $a_{2t}$ is that of $d_1$ interacting as in $I$ for the first $t$ turns. It follows that $a_{2t}$ transitions from simulated state $q_1$ to $q'_1$ by the end of the sub-run $J_{t-1}$. Also, for each pair $(a_{2k},a_{2k+1})$, the execution is as in $I_k$ for the first $t_k$ turns; hence, $a_{2k+1}$ transitions from simulated state $q_1$ to $q'_1$ at the end of the sub-run $J_k$. Thus, in total, we have at least $t+1$ agents that transition from $q_1$ to $q'_1$.
\end{proof}

\begin{theorem}\label{th:impoinfinite}
Given an infinite amount of memory on each agent, it is impossible to  simulate every  {\sf TW} protocol in the {\sf T$_3$} model (hence in all the \emph{omissive} models of Figure~\ref{figure:modcompress}),  even under the  $\diamond${\bf NO} adversary. 
 \end{theorem}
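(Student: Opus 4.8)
The plan is to derive a contradiction from the assumption that a simulator $\mathcal{S}(\mathcal{P})$ for the Pairing Protocol $\mathcal{P}_{IP}$ exists in model {\sf T$_3$} with infinite memory, tolerating the $\diamond${\bf NO} adversary. The idea is to apply Lemma~\ref{lemma:indstinguishability1} with $\mathcal{P} = \mathcal{P}_{IP}$, choosing $q_0 = p$ (producer), $q_1 = c$ (consumer), so that $\delta_{\mathcal P}(c,p) = (cs,\bot)$ and $\delta_{\mathcal P}(p,c) = (\bot,cs)$, which is indeed symmetric in the sense required by the lemma: the consumer becomes $cs$ regardless of who is starter. Let $t>0$ be the {\sf FTT} of the triplet $(\mathcal{S}, \mathcal{P}_{IP}, C_0)$ where $C_0$ has one agent in simulated state $c$ and one in $p$; note $t$ is finite and positive because the simulator must eventually produce the transition in any {\sf GF} execution, and it cannot do so instantaneously since the two agents start in distinct states and some interaction must occur. (A minor point to check: $t$ being finite follows from the liveness/correctness of the simulator applied to the two-agent system running $\mathcal{P}_{IP}$.)

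Now instantiate the lemma with a system $A$ of $2t+2$ agents, in an initial configuration $B_0$ where $t$ agents are producers (simulated state $p = q_0$) and $t+2$ agents are consumers (simulated state $c = q_1$). The lemma hands us a sequence of interactions $I^\ast$ with $\Gamma_{I^\ast}(B_0)$ being {\sf GF}, with $O(I^\ast) = t$ omissions, and with an event sequence $E(\Gamma_{I^\ast}(B_0))$ in which at least $t+1$ events are transitions of some agent from simulated state $c$ to $cs$ (here $q'_1 = cs$). Crucially, since $\mathcal{S}(\mathcal{P})$ is a valid simulator and $\Gamma_{I^\ast}(B_0)$ is {\sf GF}, there is a perfect matching $M(E)$ whose derived execution is a {\sf GF} execution of $\mathcal{P}_{IP}$ on $2t+2$ agents starting from $\pi_{\mathcal P}(B_0)$ — which is exactly $t$ producers and $t+2$ consumers. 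By Irrevocability, once an agent's simulated state becomes $cs$ it never changes; so the $\geq t+1$ distinct "$c \to cs$" events correspond to $\geq t+1$ distinct agents that are permanently in state $cs$ in the derived execution. But in the derived execution of $\mathcal{P}_{IP}$ there are only $|A_p| = t$ producers, so Safety is violated: at no time can more than $t$ agents be in state $cs$. This contradiction shows no such simulator exists in {\sf T$_3$}, and since every omissive model in Figure~\ref{figure:modcompress} reaches {\sf T$_3$} by an arrow (hence solves a subset of its problems), the impossibility propagates to all of them. Finally, the $\diamond${\bf NO} adversary suffices: $I^\ast$ has only finitely many ($t$) omissions, all inserted into finitely many gaps, so it is a legal output of $\diamond${\bf NO}.

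The one thing that needs care — and I expect it to be the only real subtlety — is verifying that the $t+1$ "$c \to cs$" transitions guaranteed by Lemma~\ref{lemma:indstinguishability1} genuinely translate, via the perfect matching, into $t+1$ \emph{distinct} agents simultaneously (or rather, eventually all at once and permanently) in state $cs$ in the derived execution. This uses Irrevocability in an essential way: each such event marks an agent entering $cs$, Irrevocability guarantees it stays there, and these are transitions of distinct agents (the lemma says "some agent from $q_1$ to $q'_1$" for each of the $t+1$ events, and since the state is irrevocable an agent can contribute at most one such event). Hence in the limit the derived execution stably has $\geq t+1 > t = |A_p|$ agents in $cs$, contradicting Safety. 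One should also double-check the bookkeeping that the event sequence and matching produced by the lemma are compatible with the simulator's own matching guarantee — but since Definition~\ref{def:simulator} asserts the existence of \emph{a} matching with a {\sf GF} derived execution for the run $I^\ast$, and the lemma's event count is a lower bound on $c \to cs$ transitions in \emph{any} valid accounting of $\Gamma_{I^\ast}(B_0)$ (the simulated states are what they are), the argument goes through cleanly.
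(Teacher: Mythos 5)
Your proposal is correct and follows essentially the same route as the paper: assume a simulator for $\mathcal P_{IP}$ exists, invoke Lemma~\ref{lemma:indstinguishability1} with $q_0=p$, $q_1=c$, $q'_1=cs$ (noting the symmetry of $\delta_{\mathcal P_{IP}}$ on $(p,c)$), and derive a contradiction between the $t+1$ irrevocable transitions into $cs$ and the safety bound $|A_p|=t$, observing finally that $I^\ast$ has only finitely many omissions and is thus a legal $\diamond${\bf NO} output. Your added checks (finiteness of the {\sf FTT} and distinctness of the $t+1$ agents via irrevocability) are points the paper leaves implicit but do not change the argument.
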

\begin{proof}
We show that the protocol $\mathcal P_{IP}$ for {\sf Pair} cannot be simulated  if any type of omissive interaction is allowed. Assume by contradiction that there is  a simulator $\mathcal S$ for $\mathcal P_{IP}$, i.e.,  $\mathcal S$ tsolves {\sf Pair} under some omissive model. 
Let us now apply Lemma~\ref{lemma:indstinguishability1} to $\mathcal S$ and $\mathcal P_{IP}$, where $q_0$ is the initial state of the providers (hence there are $t$ providers), $q_1$ is the initial state of the consumers (hence there are $t+2$ consumers), and $q'_1$ is the irrevocable state.

Because $\mathcal P_{IP}$ is symmetric with respect to starter and reactor, the hypotheses of Lemma~\ref{lemma:indstinguishability1} are satisfied, and hence there is a sequence of interactions $I^\ast$ whose execution is {\sf GF}, which causes $t+1$ transitions into the critical state. Since the execution is {\sf GF}, the derived execution of $I^\ast$ must be an execution of $\mathcal P_{IP}$, due to Definition~\ref{def:simulator}. In particular, it satisfies the irrevocability property of  {\sf Pair}. Therefore, no agent entering a critical state can ever change it. It follows that, eventually, there are at least $t+1$ agents in the critical state, which contradicts the safety property of  {\sf Pair}.

Since $I^\ast$ contains just finitely many omissive interactions, it can be generated by the $\diamond${\bf NO} adversary.
\end{proof}

Theorem~\ref{th:impoinfinite} uses as counterexample the construction  of Lemma~\ref{lemma:indstinguishability1}, implying that a simulator {\cal S} fails to simulate protocol $\mathcal P_{IP}$ in a run where the number of failures is exactly the {\sf FTT} of $({\cal S}, \mathcal P_{IP},(c,p))$. This is even more interesting if we consider simulators that are unaware of the protocol they are simulating, where by \vir{unaware} we mean that the sequence of simulated two-way interactions is not influenced by the protocol that is being simulated or by the initial configuration (i.e., general-purpose and not ad-hoc simulators).  We have shown that each of these simulators fails as soon as the number of omissions is above some constant threshold, which is independent of the simulated protocol and the initial configuration. Such a threshold is precisely the minimum number of non-omissive interactions needed to simulate a single two-way transition.

For models {\sf I$_{1}$} and {\sf I$_{2}$}, we can strengthen Theorem~\ref{th:impoinfinite}.
\begin{theorem} \label{th:meminfimp1om}
Given an infinite amount of memory on each agent, it is impossible to  simulate every {\sf TW} protocol in the interaction
 models {\sf T$_{1}$}, {\sf I$_{1}$}, and {\sf I$_{2}$}, 
 even under the $\diamond${\bf NO}$\mathbf{_1}$ adversary.
\end{theorem}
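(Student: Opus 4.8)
The plan is to reduce, as in the proof of Theorem~\ref{th:impoinfinite}, to the impossibility of simulating the protocol $\mathcal P_{IP}$ for {\sf Pair}, but to use the extra weakness of {\sf T$_1$}, {\sf I$_1$}, {\sf I$_2$} to make a single omission suffice. The combinatorial core is the following observation about $\mathcal P_{IP}$: both $cs$ and $\bot$ are sinks, so in any {\sf GF} run of a simulator $\mathcal S(\mathcal P_{IP})$ (i) once an agent's simulated state becomes $cs$ (resp.\ $\bot$) it never changes again --- otherwise that change would be an event that no perfect matching could match, since no transition of $\mathcal P_{IP}$ leaves $cs$ or $\bot$ --- and (ii) every event ``$c\to cs$'' of a consumer must be matched with an event ``$p\to\bot$'' of the (unique) agent that is a producer. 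Consequently, it suffices to exhibit a {\sf GF} run of $\mathcal S(\mathcal P_{IP})$ on three agents $\{u,w_1,w_2\}$, with $\pi_{\mathcal P}$-initial configuration $(p,c,c)$ and using at most one omission, in which $w_1$ and $w_2$ both reach simulated state $cs$: the two ``$c\to cs$'' events would then have to be matched to the single ``$p\to\bot$'' event of $u$, which is impossible for a partition, contradicting that $\mathcal S$ is a simulator.

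The feature that singles out these three models --- and makes them behave differently from {\sf T$_2$}, {\sf I$_3$}, {\sf I$_4$}, where a lone omission can be detected and compensated for on at least one side --- is that each of them admits an omissive outcome that is indistinguishable, for the agent whose behaviour we need to preserve, from a non-omissive interaction. In {\sf T$_1$} a one-sided omission freezes exactly one chosen agent while its partner performs the normal update $f_s$ or $f_r$, with no detection anywhere. In {\sf I$_2$} the starter always applies $g$, so an omission is invisible to it, while the reactor applies $h(a_r)$ instead of $f(a_s,a_r)$; in {\sf I$_1$} the starter is inert, so the omission is again starter-invisible, while the reactor stays put. Thus in all three cases a single omission can ``decouple'' one step of whatever internal handshake the simulator uses: one party advances exactly as in an omission-free execution, the other is prevented from completing its side of the step.

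With this in hand I would start from a {\sf GF} run of $\mathcal S(\mathcal P_{IP})$ on two agents $\{u,w\}$ with $\pi_{\mathcal P}$-initial configuration $(p,c)$; by liveness $w$ reaches $cs$, and by (i) there is a well-defined interaction $i_j$ at which $u$'s simulated state switches from $p$ to $\bot$, preceded by interactions throughout which $u$ has simulated state $p$. I would then assemble a run on $\{u,w_1,w_2\}$ in which $u$ first plays this two-agent run against $w_1$, but with the single interaction that commits $u$ to $\bot$ turned into an omission on $u$'s side: by the decoupling property, $w_1$'s execution --- in particular its having reached simulated state $cs$ --- is untouched, whereas $u$ does not register the transition and is, as far as the simulation is concerned, still a producer holding its resource; $u$ then plays a second, omission-free copy of the two-agent run against $w_2$, driving $w_2$ into $cs$ while $u$ finally commits to $\bot$ --- exactly once; the whole sequence is completed to an infinite {\sf GF} run with no further omissions (possible by the last clause of Definition~\ref{def:simulator}). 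Both $w_1$ and $w_2$ are then stably in $cs$, so the combinatorial core applies; and exactly one omission was used, so the $\diamond${\bf NO}$\mathbf{_1}$ adversary is enough.

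The main obstacle is making the ``splice'' precise for an arbitrary, protocol-oblivious simulator rather than for one with an explicit two-phase handshake: one must argue that after the single omission $u$ is genuinely in a state from which a fresh copy of the two-agent run forces $w_2$ into $cs$ (and that the spliced sequence really does extend to a {\sf GF} run), and one has to handle the corner case in which $w$ and $u$ change simulated state at the very same interaction of the two-agent run --- there the omission does not leave $u$ ``still holding its resource'' but instead destroys it, so the contradiction comes from liveness rather than safety. This is precisely the indistinguishability bookkeeping of the proof of Lemma~\ref{lemma:indstinguishability1}, and I would adapt that machinery, treating {\sf T$_1$} via its one-sided omissions and {\sf I$_1$}, {\sf I$_2$} via the ``starter is inert / always applies $g$'' property.
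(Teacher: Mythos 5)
Your high-level strategy---reduce to the \textsf{Pair} protocol and exploit the fact that in {\sf T$_1$}, {\sf I$_1$}, {\sf I$_2$} an omission is locally indistinguishable from an ordinary interaction---is exactly the paper's, and your ``combinatorial core'' (two $c\to cs$ events cannot both be matched to the single possible $p\to\bot$ event of the lone producer) is a sound way to get the final contradiction. But the concrete three-agent, one-omission construction you build on top of it has a genuine gap, and it is not the corner case you flag. Your splice only forces $w_1$ into $cs$ if, in the two-agent reference run $I$, the consumer's simulated transition to $cs$ happens \emph{no later than} the producer's transition to $\bot$. Nothing forces a simulator to behave this way: consider one in which the producer commits first and the consumer enters $cs$ only upon later receiving a ``certificate'' generated at the producer's commit step. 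Freezing $u$ at that step leaves $u$'s \emph{internal} state different from its state in $I$, so every subsequent interaction diverges from $I$, the certificate is never issued, and $w_1$ need never reach $cs$. You cannot rescue this by appealing to liveness of the (legal, one-omission) continuation, because liveness only guarantees that \emph{one} of the two consumers eventually stabilizes in $cs$---which is consistent with correct behaviour. Likewise, ``$u$ then plays a fresh copy of the two-agent run against $w_2$'' is not available: $u$ is no longer in an initial state, so the reference run says nothing about what happens next.

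Closing this gap is precisely what the machinery of Lemma~\ref{lemma:indstinguishability1} is for, and it cannot be compressed into three agents and one placed omission. The paper uses $2t+2$ agents, where $t$ is the {\sf FTT}, and for each $k<t$ an auxiliary two-agent run $I_k$ containing exactly one omission at step $k$; $\diamond${\bf NO}$_1$-resilience forces the simulator to complete the simulated transition on each $I_k$ by some finite time $t_k$, and each fooled consumer $a_{2k+1}$ is made to experience a \emph{genuine length-$t_k$ prefix of $I_k$}---long enough that it has already reached $cs$---while the extra agent $a_{2t}$ experiences the omission-free run $I$ one step per pair. The feature of {\sf T$_1$}, {\sf I$_1$}, {\sf I$_2$} is then used not to place an omission in the counterexample run but to \emph{eliminate} it: the omissive step of each $J_k$ is replaced by ordinary interactions with third-party agents that are locally indistinguishable from it, so the final run $I^\ast$ contains no omissions at all, and the one-omission budget is spent entirely on the auxiliary runs $I_k$ that pin down the simulator's behaviour. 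So your closing remark that you ``would adapt that machinery'' is not a finishing touch---it is the proof, and the three-agent shortcut you describe in detail does not survive contact with an arbitrary protocol-oblivious simulator.
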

\begin{proof} The proof uses a construction analogous to the one used in Lemma~\ref{lemma:indstinguishability1}.
We consider a system $A=\{a_0,\ldots, a_{2t+1}\}$ of $2t+2$ agents, and we build $t$ sequences of interactions $I_k$ between two agents $d_0$ and $d_1$, exactly as in Lemma~\ref{lemma:indstinguishability1}. Recall that the run $I_k$ contains only one omission. Hence, if a simulator is resilient to the $\diamond${\bf NO}$\mathbf{_1}$ adversary, it eventually succeeds in making $d_0$ and $d_1$ simulate a full two-way interaction, say after $t_k$ one-way interactions. Since $t_k$ is well defined, we can go on and construct the sequence $J_k$. However, the $J_k$ that we will use in this proof differs from its counterpart used in Lemma~\ref{lemma:indstinguishability1} by two elements: $J_k[k]$ and $J_k[k+1]$. In particular, our new $J_k$'s will contain no omissions.

If the model is {\sf T$_1$}, we replace the old interactions $J_k[k]$ and $J_k[k+1]$ by a single non-omissive interaction between $a_k$ and $a_{2t}$ (in which $a_k$ is the starter if and only if $d_0$ is the starter in $I[k]$).

Let the model be {\sf I$_1$}. If the interaction $I[k]$ is $(d_{0},d_{1})$, then we replace the old interactions $J_k[k]$ and $J_k[k+1]$ by the single interaction $(a_k,a_{2t})$. Otherwise, if $I[k]=(d_{1},d_0)$, we set $J_k[k]=(a_{2t},a_{2t+1})$ and $J_k[k+1]=(a_{k+1},a_{2t+1})$.

Consider now model {\sf I$_2$}. If the interaction $I[k]$ is $(d_{0},d_{1})$, then we set $J_k[k]=(a_k,a_{2t})$ and $J_k[k+1]=(a_{k+1},a_{2t+1})$. Otherwise, if $I[k]=(d_{1},d_0)$, we replace the old interactions $J_k[k]$ and $J_k[k+1]$ by the three interactions $(a_{2t},a_{2t+1})$, $(a_{k},a_{2t+1})$, and $(a_{k+1},a_{2t+1})$.

Finally, we concatenate the $t$ finite sequences $J_k$ to obtain the new run $I^{*}$, which contains no omissions. Let us now examine the execution of $I^{*}$ from the initial configuration $B_0$ defined in Lemma~\ref{lemma:indstinguishability1}. Once again, each of the $t$ pairs $(a_{2k},a_{2k+1})$, with $0\leq k<t$, has an initial execution that is the same as that of $(d_0,d_1)$ interacting for $k$ turns as in $I_k$. Then, the new interactions that we added in lieu of $J_k[k]$ and $J_k[k+1]$ make $a_{2k}$ and $a_{2k+1}$ change state in the same way as in the omissive interaction $I_k[k]$. But as a side effect, also $a_{2t}$ changes state as it would in a non-omissive interaction with $a_{2k}$. As a consequence, by the end of $I^{*}$, all the agents of the form $a_{2k+1}$ with $0\leq k<t$, as well as $a_{2t}$, have transitioned from simulated state $q_1$ to $q'_1$. Thus, in total, at least $t+1$ agents transition from $q_1$ to $q'_1$.

Now the proof can be completed exactly as in Theorem~\ref{th:impoinfinite}, by showing that the protocol $\mathcal P_{IP}$ cannot be simulated.
\end{proof}
One may wonder what would happen if we wanted to construct simulators that \vir{gracefully degrade} when omissions reach a certain threshold $t_{O}$. More precisely, for a sequence of interactions $I$ with $O(I) < t_{O}$, the simulator has to perform a full simulation of ${\cal P}$; if $O(I) \geq t_{O}$, the simulator has to start a simulation, but then it is allowed to stop forever in a ``consistent'' simulated state. Essentially, in the second case, we allow the sequence of events $E(\Gamma)$ defined in Section~\ref{s:simulation} to be finite (in other terms, we drop the simulator's ``liveness'' requirement).
 \begin{theorem} Given an infinite amount of memory on each agent, in the {\sf T$_3$} model (and hence in all the \emph{omissive} models of Figure~\ref{figure:modcompress}), any gracefully degrading  simulator that simulates all {\sf TW} protocols must have a threshold $t_O\leq 1$.
\end{theorem}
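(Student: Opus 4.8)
The plan is to argue by contradiction, replaying the construction of Lemma~\ref{lemma:indstinguishability1} against an alleged gracefully degrading simulator $\mathcal S$ with threshold $t_O\geq 2$, and deriving a violation of the safety of \textsf{Pair} exactly as in Theorem~\ref{th:impoinfinite}. The key observation is that the hypothesis $t_O\geq 2$ means that on every run $I$ with $O(I)\leq 1$ the protocol $\mathcal S$ must behave as a genuine simulator in the sense of Definition~\ref{def:simulator} (both safety and liveness, plus the extension clause). In particular this holds for each of the auxiliary two-agent runs $I_k$ used in the proof of Lemma~\ref{lemma:indstinguishability1}, since each $I_k$ contains exactly one omission. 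Consequently $t=\mathsf{FTT}(\mathcal S,\mathcal P_{IP},(c,p))$ is finite (and positive, since no agent starts in $cs$), each stopping time $t_k$ is well defined, and the run $I^\ast$ over $2t+2$ agents, together with its {\sf GF} extension, can be built verbatim as in the lemma, with $O(I^\ast)=t$.

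By Lemma~\ref{lemma:indstinguishability1} applied with $q_0=p$, $q_1=c$, $q'_1=cs$ (this instance is legitimate because $\mathcal P_{IP}$ is symmetric with respect to starter and reactor), the execution $\Gamma_{I^\ast}(B_0)$ --- where $B_0$ has $t$ agents in state $p$ and $t+2$ in state $c$ --- makes at least $t+1$ agents change their simulated state from $c$ to $cs$, and it does so within the finite prefix $J_0J_1\cdots J_{t-1}$ of $I^\ast$. Each such change is a genuine update of a simulated state, so its index must appear in every event sequence $E(\Gamma_{I^\ast}(B_0))$, and in any perfect matching it corresponds to a $\mathcal P_{IP}$-transition $(c,p)\mapsto(cs,\bot)$ in the derived execution.

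It remains to split on the size of $t$. If $t<t_O$, then $\mathcal S$ must fully simulate $I^\ast$, so its derived execution is an infinite {\sf GF} execution of $\mathcal P_{IP}$. If $t\geq t_O$, graceful degradation still guarantees that the (now possibly finite) derived execution is a valid execution of $\mathcal P_{IP}$ ending in a consistent simulated configuration. In both cases the derived execution is a valid execution of $\mathcal P_{IP}$ started from a configuration with only $|A_p|=t$ producers, yet containing $t+1$ transitions into $cs$; by the irrevocability property, these $t+1$ agents never leave $cs$, so the derived execution reaches a configuration with at least $t+1$ agents in state $cs$. This contradicts the safety property of \textsf{Pair}, so a gracefully degrading simulator with $t_O\geq 2$ cannot exist in {\sf T$_3$}; that is, $t_O\leq 1$.

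I expect the only genuinely delicate step to be the first one: pinning down that ``threshold $t_O\geq 2$'' forces ordinary-simulator behavior on all one-omission runs, so that the entire apparatus of Lemma~\ref{lemma:indstinguishability1} (in particular the extendability clause of Definition~\ref{def:simulator}, which is what makes the $t_k$ well defined) may be reused unchanged. Once that is secured, the argument is a direct replay of Theorem~\ref{th:impoinfinite} with $t$ omissions in place of one.
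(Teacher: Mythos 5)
Your proposal is correct and follows essentially the same route as the paper: observe that a threshold $t_O\geq 2$ forces genuine simulator behavior on all one-omission runs, so the sequences $I$, $I_k$ and the times $t_k$ of Lemma~\ref{lemma:indstinguishability1} are well defined, and then the run $I^\ast$ produces $t+1$ agents in state $cs$ with only $t$ producers, violating the safety of \textsf{Pair} whether or not the simulator degrades. Your explicit case split on $t<t_O$ versus $t\geq t_O$ and the remark about the extendability clause of Definition~\ref{def:simulator} make the argument somewhat more careful than the paper's terse version, but the underlying idea is identical.
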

\begin{proof}
Recall that in Lemma~\ref{lemma:indstinguishability1} we constructed a sequence of interactions $I^*$ for a set of agents $A$, which was then applied to the protocol $\mathcal P_{IP}$ in order to prove Theorem~\ref{th:impoinfinite}. Suppose now that a simulator has threshold $t_O>1$. If such a simulator executes a run with at most one omission, it must effectively simulate infinitely many two-way interactions. In particular, it is able to simulate the first two-way interaction in a system of two agents, and therefore the sequence $I$ mentioned in Lemma~\ref{lemma:indstinguishability1} is well defined for this simulator, as well as the sequences $I_k$ and the numbers $t_k$. But then, as the agents of $A$ execute the same simulator according to the sequence $I^*$, they violate the safety property of {\sf Pair}, because $t+1$ of them change their simulated state from $c$ to $cs$. Since they reach a non-consistent simulated state, this means that a gracefully decaying simulator with threshold $t_O>1$ cannot simulate $\mathcal P_{IP}$.
\end{proof}

\begin{figure}
\begin{minipage}{0.525\textwidth}
\scriptsize
\begin{framed}
\begin{algorithmic}[1]

\State $my\_id=unique\_ID;$ $\, state_{\mathcal{P}}= initial\_state_{\mathcal{P}};$ $\, id_{other}=\bot;$ $\, state_{other}=\bot;$ $ \, state_{sim}=available$  \Comment{Agent's variables}
\medskip

\State \sf{Upon Event}  {Reactor delivers} {$\,(id^{s}, state^{s}_{\mathcal{P}}, id^{s}_{other}, state^{s}_{other}, state_{sim}^{s})$} 

    	    \If {$(state_{sim}=available \, \wedge \, state_{sim}^{s}=available)$ } \label{id2:pair1}
    	       \State $state_{sim}=pairing$
	       \State $id_{other}=id^{s};$ $\, state_{other}=state^{s}_{\mathcal{P}}$ \label{id2:pair2}
	   	
	      \ElsIf {$(state_{sim}=available \, \wedge \, state_{sim}^{s}=pairing \, \wedge \, id^{s}_{other}=my\_id  \, \wedge \, state^{s}_{other}=state_{\mathcal{P}})$ }\label{id2:iflock}
	      \State $state_{sim}=locked$ \label{id2:createlock}
	      \State $id_{other}=id^{s};$ $\, state_{other}=state^{s}_{\mathcal{P}}$
	      \State $state_{\mathcal{P}}= \delta_{{\cal P}}(state_{\mathcal{P}},state_{other})[0]$ \label{id2:chst2}
	      
	    \ElsIf {($state_{sim}=pairing \, \wedge \, id_{other}=id^{s} \, \wedge \, id^{s}_{other}=my\_id  \wedge state_{sim}^{s}=locked$ )} \label{id2:ifotherlocked}
	        \State $state_{sim}=available$
	        \State $id_{other}=state_{other}=\bot$
	      
	        \State $state_{\mathcal{P}}= \delta_{\cal P}(state^{s}_{\mathcal{P}},state_{\mathcal{P}})[1]$  \label{id2:ifotherend}
	          
           \ElsIf {$(id_{other}=id^{s}  \, \wedge \,  id^{s}_{other} \neq my\_id )$ } \label{id2:iflockevolution}
			 \State $state_{sim}=available$  \label{id2:unlock}
	        	        \State $id_{other}=state_{other}=\bot$   \label{id2:endlock}
		 \EndIf 
	     	 
\end{algorithmic}
\end{framed}
\vspace{-0.35cm}
\caption{ Simulation protocol $\mathcal{S}_{ID}$ \label{id2:algorithm1}}
\end{minipage}%
\begin{minipage}{0.47\textwidth}
\hspace{0.5cm}
\includegraphics[scale=0.43]{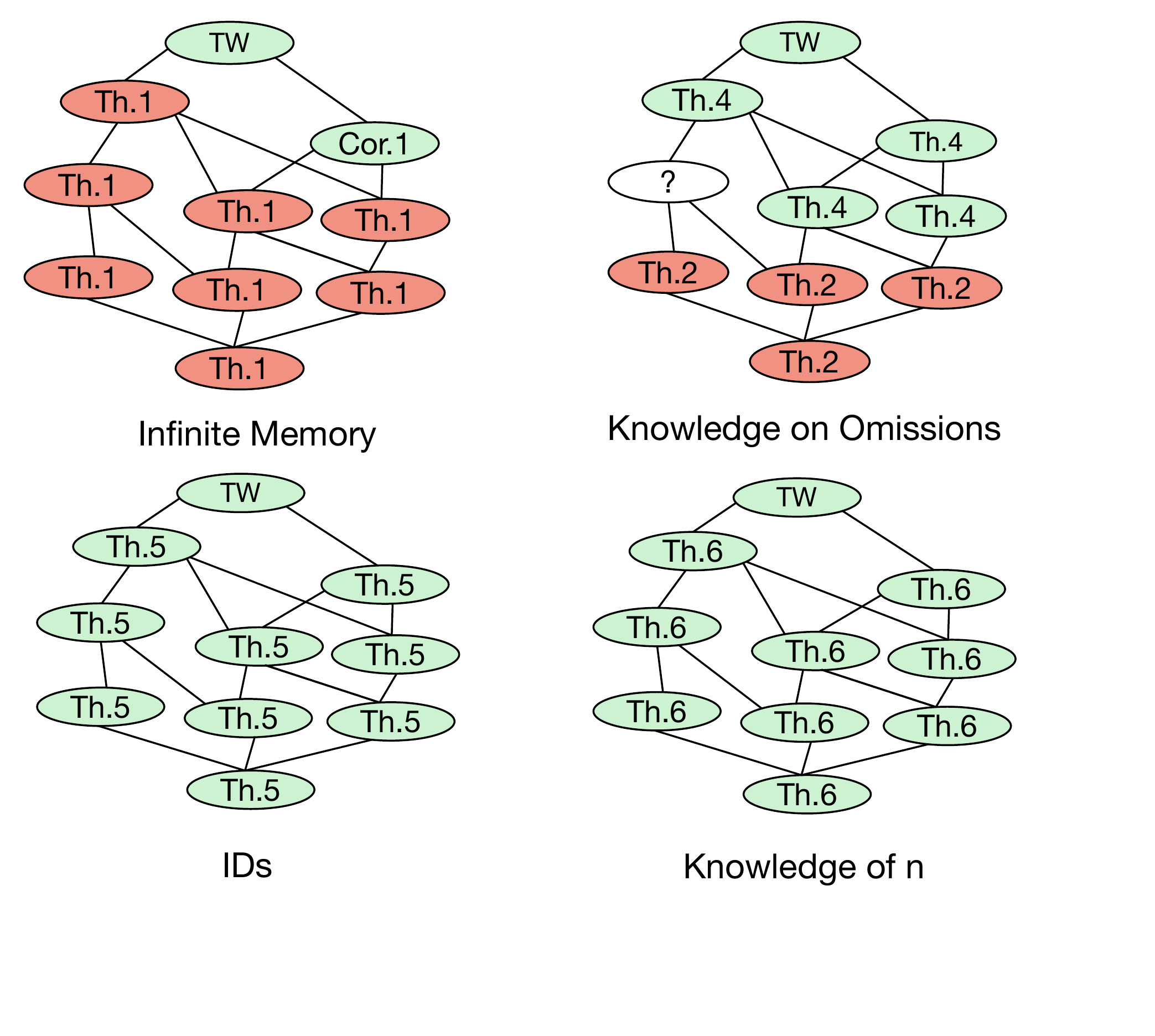}
\vspace{-1.65cm}
\caption{Map of results (cf.~Figure~\ref{figure:modcompress})\label{id:algorwq2}}
\end{minipage}
\end{figure}

\section{Simulation in Omissive Models} \label{sec:6}

In this section we focus on designing simulators of two-way protocols. In light of the impossibilities presented in the previous section, additional assumptions are necessary. Section~\ref{knowfaultsimulator} assumes some knowledge
on the maximum number of omissions, Section~\ref{idsimulator} assumes the presence of unique IDs, and finally in Section~\ref{nsimulator} we assume to know the number of agents. 

\subsection{Knowledge on Omissions: Simulator $\mathcal{S}_{KnO}$ \label{knowfaultsimulator}}

Here we assume to know an upper bound $o$ on the number of omissions, i.e., for any sequence of interactions $I$ on which the simulator runs we have $O(I) \leq o$. We will show that under this assumption there exists a simulator for models {\sf I}$_3$ and {\sf I}$_4$. 
This contrasts with models {\sf I}$_1$ and {\sf I}$_2$, in which it is impossible to simulate even when $O(I)\leq 1$ (see Theorem \ref{th:meminfimp1om}).

We explain the simulator $\mathcal{S}_{KnO}$ under model {\sf I}$_3$; the version for model {\sf I}$_4$ is only slightly different, and its correctness follows from symmetry considerations. 
The simulator is based on the exchange of ``tokens''. Each simulated state $q \in Q_{\cal P}$ is represented as a sequence of numbered tokens: $\langle q,1\rangle,\ldots, \langle q,o+1\rangle$. Intuitively, an agent tries to transmit its state to others by sending one token at a time, for $o+1$ consecutive interactions, each time incrementing the counter. When a reactor detects an omission, it generates a \emph{joker} token $\langle J\rangle$, which will also be sent in successive interactions. Note that there are never more than $o$ jokers circulating.
Every time an agent gets a new token, it checks if it owns the complete set of $o+1$ tokens representing some state $q$ and, if so, it simulates (part of) an interaction with a hypothetical partner in state $q$. If the complete set of tokens is not available, the agent is allowed to replace the missing tokens with the jokers that it currently owns. After the $o+1$ tokens have been used, they are discarded and withdrawn from circulation. However, if an agent uses some joker tokens, it ``takes note'' of what tokens these jokers are replacing. If later on the same agent obtains one of the tokens in this list, say $\langle q,i\rangle$, it turns $\langle q,i\rangle$ into a joker and removes $\langle q,i\rangle$ from the list. (This is reminiscent of the card game Rummy.)

\smallskip
\noindent{\bf Simulator Variables.}  
Each agent has a queue of tokens to be sent, called $sending$, initially empty. It also has a variable $state_{sim}={\sf available}$ for the state of the simulator protocol, a variable $state_{\mathcal{P}}$ for the state of the simulated protocol (initialized according to its initial simulated state), and a multi-set of tokens called $Jokers$, initially empty.

\smallskip
\noindent  {\bf Simulator Protocol.}
Suppose that an agent interacts as a starter. If $state_{sim}={\sf available}$ and $sending$ is empty, the agent switches to $state_{sim}={\sf pending}$ and inserts the complete set of tokens $\langle state_{\mathcal{P}},1\rangle,\ldots, \langle state_{\mathcal{P}},o+1\rangle$ into $sending$. In any case, and regardless of $state_{sim}$, the starter removes the first token from the queue, and the reactor reads it.

Suppose now that an agent interacts as a reactor. To begin with, it reads the first token from the $sending$ queue of the starter, and enqueues it into its own $sending$ queue. If it detects an omission, it enqueues a joker token instead. Then it performs a preliminary check: if $state_{sim}={\sf pending}$, and the agent can find a complete set of tokens for its own state (i.e., $state_{\mathcal{P}}$) in its own $sending$ queue (possibly using some joker tokens as wildcards), it switches to $state_{sim}={\sf available}$ and removes the set of $o+1$ used tokens from the queue. After this preliminary check, the core protocol starts: if $state_{sim}={\sf available}$, and the agent has a complete set of tokens for some state $q$ in its own $sending$ queue (possibly using some joker tokens), it removes the set of $o+1$ used tokens from the queue, it simulates its part of the two-way transition with an agent in state $q$ (i.e., it updates $state_{\mathcal{P}}=\delta(q,state_{\mathcal{P}})[1]$), and it enqueues into $sending$ a complete set of ``state change'' tokens, i.e., $\langle(q,state_{\mathcal{P}}),1\rangle,\ldots,\langle(q,state_{\mathcal{P}}),{o}+1\rangle$. On the other hand, if $state_{sim}={\sf pending}$, and the agent has a complete set of state change tokens of the form $\langle(state_{\mathcal{P}},q'),i\rangle$ in its own $sending$ queue (possibly using some joker tokens), it removes the set of $o+1$ used tokens from the queue, it updates $state_{\mathcal{P}}=\delta(state_{\mathcal{P}},q')[0]$, and switches to $state_{sim}={\sf available}$.

Also, whenever a reactor uses a joker token as a substitute for some token $\langle q,i\rangle$, it adds $\langle q,i\rangle$ to the multi-set $Jokers$. Symmetrically, when it receives a new token $\langle q,i\rangle$ from a starter and that token is in $Jokers$, it removes one copy of $\langle q,i\rangle$ from $Jokers$, removes the last copy of $\langle q,i\rangle$ from $senders$, and enqueues a new joker token into $senders$.


\begin{lemma}
The derived execution of simulator $\mathcal{S}_{KnO}$ is {\sf GF}.
\end{lemma}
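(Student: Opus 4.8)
The plan is to show that the derived execution of $\mathcal{S}_{KnO}$ satisfies the global fairness condition, given that the underlying execution $\Gamma_I(C_0)$ of $\mathcal{S}_{KnO}$ is itself \textsf{GF}. The key structural fact to establish first is that token circulation eventually stabilizes: since the adversary inserts at most $o$ omissions over the whole run, only finitely many jokers are ever created, and every joker that is created is eventually matched back to its originating token (or the run contains only finitely many interactions after the last omission in which a matching token appears — but \textsf{GF} of the simulator execution forces the matching to happen whenever it is possible). So after some finite prefix, no jokers remain in any agent's $Jokers$ multi-set, and from that point on each complete set of $o+1$ tokens representing a simulated state is consumed exactly once, triggering exactly one half of a simulated two-way transition. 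I would first carefully argue that, past this finite prefix, every simulated state change of an agent in the derived run corresponds to a legitimate application of $\delta_{\mathcal P}$ and that events pair up correctly into a perfect matching — but this is really the content of the (presumably adjacent) safety lemma, so here I only need the liveness half.

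Next I would set up the \textsf{GF} argument at the level of the derived run. Fix two closed sets of configurations $\mathcal C, \mathcal C'$ of $\mathcal P$ such that from every $C \in \mathcal C$ some interaction leads into $\mathcal C'$, and suppose the derived execution $\Gamma_D(\pi_{\mathcal P}(C_0))$ visits $\mathcal C$ infinitely often. I want to conclude it visits $\mathcal C'$ infinitely often. The idea is to ``lift'' the enabling interaction at the $\mathcal P$ level to a sequence of interactions at the $\mathcal{S}_{KnO}$ level. Whenever the derived configuration lies in $\mathcal C$, there is a pair of agents $a_x, a_y$ whose simulated interaction would move the configuration into $\mathcal C'$; I would define a closed set $\widehat{\mathcal C}$ of simulator configurations projecting into $\mathcal C$ (with all queues empty, no jokers, so the simulator is ``quiescent'') and a corresponding $\widehat{\mathcal C}'$, then show that from $\widehat{\mathcal C}$ a bounded-length omission-free sequence of simulator interactions — $a_x$ becomes a starter and pushes its $o+1$ tokens toward $a_y$ over $o+1$ consecutive interactions, $a_y$ assembles the set and applies $\delta_{\mathcal P}(\cdot,\cdot)[1]$, then sends the state-change tokens back so $a_x$ applies $\delta_{\mathcal P}(\cdot,\cdot)[0]$ — drives the simulator into $\widehat{\mathcal C}'$. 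Applying the \textsf{GF} hypothesis on $\Gamma_I(C_0)$ to the pair $(\widehat{\mathcal C}, \widehat{\mathcal C}')$ gives that $\widehat{\mathcal C}'$ is visited infinitely often, hence $\mathcal C'$ is visited infinitely often in the derived execution. A subtlety: the simulator's execution need not literally pass through quiescent configurations with empty queues infinitely often, so I actually need to take $\widehat{\mathcal C}$ to be the set of \emph{all} simulator configurations whose projection is in $\mathcal C$ and argue reachability of $\widehat{\mathcal C}'$ from each of them by draining queues first; this uses that queues have bounded relevant content (at most $O(|Q_{\mathcal P}|)$ pending token-sets per agent after the joker-free prefix) so the draining is finite.

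The main obstacle I expect is precisely this mismatch between the ``one atomic two-way step'' abstraction and the fact that a single simulated transition is spread across $\Theta(o)$ low-level interactions with intermediate, non-quiescent states where partial token sets and leftover jokers float around. Handling it requires (i) the finite-prefix argument that removes jokers from the picture, and (ii) a careful definition of the closed sets $\widehat{\mathcal C}$, $\widehat{\mathcal C}'$ together with a uniform bound on the number of low-level interactions needed to realize one derived step regardless of which configuration of $\widehat{\mathcal C}$ we start from. Once these are in place, the conclusion follows by a direct invocation of the extended \textsf{GF} definition (which, as the paper notes, is exactly designed to handle reachability through sets rather than single configurations and to survive the infinitely-many-states setting), and the ``extendability'' clause of Definition~\ref{def:simulator} guarantees that such omission-free continuations exist. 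I would also remark that nothing here depends on the $o$ omissions being early versus late, since after any omission the joker it spawns is reabsorbed in finite time by \textsf{GF} of the simulator run.
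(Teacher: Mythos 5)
Your high-level strategy is the same as the paper's: lift $\mathcal C$ and $\mathcal C'$ to the closed sets $\widetilde{\mathcal C}$, $\widetilde{\mathcal C'}$ of simulator configurations projecting into them, show that from every configuration of $\widetilde{\mathcal C}$ some configuration of $\widetilde{\mathcal C'}$ is reachable in a bounded number of omission-free interactions, and then invoke the extended \textsf{GF} definition. However, the part you explicitly defer as ``the main obstacle'' is in fact the entire technical content of the paper's proof, so as it stands the proposal has a genuine gap. The paper resolves it by a case analysis on $n$ ($n>4$, $n\in\{3,4\}$, $n=2$) giving an explicit token-shuffling construction that brings the two designated agents $a_s,a_r$ to the {\sf available} state with empty buffers, together with a counting argument (there are at most $n(o+1)$ tokens, so redistributing them over the other $n-2$ agents keeps every bystander below the $2(o+1)$ tokens it would need to assemble a complete run and fire a transition) guaranteeing that \emph{no agent changes its simulated state during the rearrangement}. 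That invariant is not a technicality: if some bystander performed a simulated transition while you ``drain the queues,'' the projected configuration would leave $\mathcal C$, and the hypothesis on $\mathcal C$ no longer guarantees that $\mathcal C'$ is reachable from the new configuration. Your sketch bounds the amount of queued material but never argues that the draining is side-effect-free, which is where the actual work lies.

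In addition, your preliminary structural claim --- that after a finite prefix no jokers remain and every joker is eventually ``matched back'' to its originating token --- is both unnecessary and not justified. The mechanism in $\mathcal{S}_{KnO}$ runs in the opposite direction (a real token $\langle q,i\rangle$ arriving at an agent that previously spent a joker on $\langle q,i\rangle$ is \emph{converted into} a new joker), and global fairness does not force a joker sitting in a $sending$ queue to ever be used as a wildcard, so you cannot conclude that the $Jokers$ multi-sets empty out. The paper's reachability construction simply works with whatever tokens and jokers are in circulation, so no joker-free prefix is needed; building your argument on one introduces an unprovable step without removing the need for the explicit rearrangement argument above.
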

\begin{proof}

 Let $\mathcal C$ and $\mathcal C'$ be closed sets of configurations of $\mathcal P$, such that every configuration of $\mathcal C$ can become one of $\mathcal C'$ after a two-way interaction, and suppose that the derived execution passes through $\mathcal C$ infinitely many times. Let $\widetilde{\mathcal C}$ be the set of configurations of the simulator protocol whose simulated states are in $\mathcal C$, and let $\widetilde{\mathcal C'}$ be constructed similarly from $\mathcal C'$. By assumption, the simulation passes through $\widetilde{\mathcal C}$ infinitely often; we claim that it must go through $\widetilde{\mathcal C'}$ infinitely many times, as well. By definition of $\mathcal C$, for every $C_j\in \widetilde{\mathcal C}$, there is an interaction in $\mathcal P$ between $a_s$ and $a_r$ that maps $\pi_\mathcal P(C_j)$ into $\pi_\mathcal P(C'_j)$, where $C'_j\in \widetilde{\mathcal C'}$ is obtained from $C_j$ by changing the simulated states of $a_s$ and $a_r$ according to $\delta_\mathcal P$ (and possibly some simulator variables). Since in $C_j$ there are no ``pending'' transactions, $C_j$ has a possible continuation $\widehat{C}_j$ where agents $a_s$ and $a_r$ are {\sf available}, such that $\pi_\mathcal P(\widehat{C}_j)=\pi_\mathcal P(C_j)$, and from this configuration the final configuration $\widetilde{\mathcal C'}$ is reachable. 
 The proof is done by case analysis on the number of agents:
 \begin{itemize}
 \item Case $n > 4$: In this case a simple counting arguments shows that it is always possible to find a finite sequence of interactions such that: agents $a_s$ and $a_r$ are available and have an empty buffer and no agent sets $state_{\mathcal{P}}=\delta(q_s,state_{\mathcal{P}})[1]$. Let $q_r,q_s$ be the state of $a_r,a_s$, respectively. Having at most $n(o+1)$ tokens, distributing this tokens among $n-2$ agents, such that each agent obtains at most $(1+\frac{2}{n-2})(o+1) < (1+2/3)(o+1)$ tokens. No agent can set $state_{\mathcal{P}}=\delta(q_s,state_{\mathcal{P}})[1]$ if it does not receives at least $2(o+1)$ tokens, please note that by assumption we started by a configuration where agents were either $pending$ or $available$ and no run $<(q_s,*),*>$ was present. 
 Starting from this configuration, we first show how to "unlock" agent $a_r$ (or $a_s$ if it is pending). If agent $a_r$ is pending there is an agent $a_x$ containing the first token of the run $<q_r,*>$, this token can be in position $j$ in the buffer of $a_x$ with $0<j<(1+2/3)(o+1)$. If $j=0$ the token is sent to $a_r$ by doing an interaction $(a_x,a_r)$ otherwise we do first a sequence of $j-1$ interactions $(a_x,a_r)$, note that this could at most trigger the exit of $a_r$ from the $pending$ state since $a_r$ receives strictly less then $2(o+1)$ tokens, if this is the case we have done. In any case we append another sequence of interactions $(a_r,a_x)$ until the buffer of $a_r$ is empty. Now either $a_r$ is available, or the token in position $j=0$ in the buffer of $a_x$ is the first token of the run of $a_r$, in this case $(a_x,a_s)$  moves this token to the buffer of $a_s$. This procedure is iterated until all tokens but one are in the buffer of $a_s$, the last token is in position $j=0$ of agent $a_y$ and the buffer of $a_r$ is empty, when this happens a trivial sequence of interactions $(a_s,a_r)$ followed by $(a_y,a_r)$ bring $a_r$ in state $available$ with an empty buffer and it empties the buffer of $a_s$.
 The same procedure can be used to unlock $a_s$ if it is pending. 
 Once $a_s$ and $a_r$ are both available and with an empty buffer a sequence of $(o+1)$ interactions $(a_s,a_r)$ followed by a sequence of $(o+1)$ interactions $(a_r,a_s)$ brings the system in configuration $\widetilde{\mathcal C'}$.
 
 \item Case $n=4,3$: Let us first examine the case of three agents, and let $a_x$ with state $q_x$ be the third agent. Let us assume that $a_r$ does not have in the buffer a complete run for $<q_x,*>$ otherwise it sends its tokens to $a_x$ until the first token of the run for $q_x$ is sent to $a_x$. 
 Then we send all tokens for the run $<q_r,*>$  to $a_r$: if agent $a_x$ has some token for $<q_s,*>$ or $<q_r,*>$ in position $0$ of the buffer we have an interaction $(a_x,a_r)$, the same is done for agent $a_s$ this procedure stop when both agents $a_s,a_x$ have a token for $<q_x,*>$ in position $0$. When this happen we let agents have interactions $(a_s,a_x)$ and $(a_s,a_x)$, this allows the agent to remove the tokens for $q_x$, note that during this procedure agent $a_s$ does not increases its number of tokens for $<q_x,*>$ contained in buffer. Iterating this procedure agent $a_r$ obtains a complete run $<q_r,*>$ and $<q_s,*>$. At this points it exits from pending being available and it executes the first portion of the simulated two interactions. A series of successive $o+1$ interactions $(a_r,a_s)$ brings the system in configuration $\widetilde{\mathcal C'}$.
 The case for $n=4$ agents is analogous. 
 
 \item Case $n=2$: In this case after at most $2(o+1)$ interactions $(a_s,a_r)$ and after other $(o+1)$ interactions $(a_r,a_s)$ we have that the agents changes simulated state bringing the system in configuration $\widetilde{\mathcal C'}$.
 \end{itemize}
\end{proof}

\begin{theorem}
Assuming {\sf I}$_{3}$ or {\sf I}$_4$ and ${\Theta}(\log n |Q_{\cal P}|(o+1) )$ bits of memory on each agent, there exists a protocol that simulates every {\sf TW} protocol.
\label{th2:simO}
\end{theorem}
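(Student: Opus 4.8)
The plan is to show that the simulator $\mathcal{S}_{KnO}$ described in the preceding text satisfies Definition~\ref{def:simulator}, under the assumption that an upper bound $o$ on the number of omissions is known. Two properties must be established: (i) \emph{safety/matching}, i.e., for every {\sf GF} execution of $\mathcal{S}_{KnO}$ there is a sequence of events with a perfect matching whose derived execution is a legal execution of $\mathcal{P}$; and (ii) \emph{liveness}, i.e., the derived execution is itself {\sf GF}. Property (ii) is exactly the content of the preceding lemma, so the real work is (i), together with verifying the memory bound and the technical clause about extending finite prefixes to {\sf GF} runs.

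For the matching argument, I would track the lifecycle of tokens. The key structural invariant is a conservation law: for each agent, $state_{\mathcal{P}}$ together with the multiset of tokens currently ``in flight'' (those in $sending$ queues, plus the bookkeeping in $Jokers$) encodes a consistent partial picture of which two-way transitions are in progress. Concretely, I would argue that whenever an agent in state {\sf available} assembles a complete set $\langle q,1\rangle,\dots,\langle q,o+1\rangle$ (using jokers as wildcards) and performs the reactor half of a transition $\delta_{\mathcal P}(q,state_{\mathcal P})$, there is a \emph{unique} agent that originally emitted that token-set as the starter (the one that went {\sf pending} and injected $\langle state_{\mathcal P},1\rangle,\dots$), and that this starter will, because of the ``state-change tokens'' $\langle(q,state_{\mathcal P}),i\rangle$ fed back to it, eventually complete the matching starter half $\delta_{\mathcal P}(state_{\mathcal P},q')[0]$. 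The joker mechanism (the Rummy-like reclamation) guarantees that at most $o$ jokers ever circulate, so a wildcard can never be ``double-spent'': every completed set corresponds to exactly one genuine emitted set, possibly with some genuine tokens delayed by omissions and temporarily stood in for by jokers that are later retired when the delayed tokens arrive. This gives the pairing of events $(e_j,e_k)$ required by Definition~\ref{def:matching}; ordering the pairs by $\min\{e_j,e_k\}$ yields the derived run, and the invariant ensures each pair really realizes $\delta_{\mathcal P}$ on the projected states. The memory bound is immediate from the representation: a state needs $o+1$ tokens, each token is a pair from $Q_{\mathcal P}$ (or $Q_{\mathcal P}^2$ for state-change tokens) plus a counter up to $o+1$, and the $sending$ queue and $Jokers$ multiset each hold $O((o+1))$ such tokens, for $\Theta(\log n \,|Q_{\mathcal P}|(o+1))$ bits total; the $\log n$ factor absorbs the counter and the polynomially-bounded number of tokens per agent.

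I expect the main obstacle to be establishing the invariant rigorously in the presence of interleaving: many token-sets for many simulated interactions are in flight simultaneously, tokens for the \emph{same} abstract state $q$ emitted by different agents are indistinguishable, and jokers can substitute across these. The delicate point is proving that the reclamation rule (``when a starter's token $\langle q,i\rangle$ reaches an agent holding a matching joker, convert it back to a joker and dequeue the real token'') exactly compensates: one must show, by an accounting argument over the total multiset of circulating tokens, that no abstract transition is ever completed twice and none is lost, i.e., the map from completed reactor-halves to emitted starter-sets is a bijection onto the sets that ever get fully consumed. Once this is in place, verifying that the derived execution obeys the irrevocability-style constraints of arbitrary $\mathcal P$ is routine, and the {\sf GF}-extension clause follows because from any reachable configuration one can drain all $sending$ queues and $Jokers$ (scheduling interactions to complete every pending transition, as in the liveness lemma), reaching a configuration where a fresh omission-free {\sf GF} continuation exists. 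I would also separately check the {\sf I}$_4$ case, where the roles of the detecting side swap; the argument is symmetric and I would state it as following ``by the same token accounting with starter and reactor interchanged.''
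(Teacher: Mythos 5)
Your overall architecture matches the paper's: establish a consistent perfect matching of reactor-half and starter-half events, invoke the preceding lemma for the {\sf GF} property of the derived execution, check the memory bound, and dispose of {\sf I}$_4$ by symmetry. But the load-bearing step of your matching argument --- that ``there is a \emph{unique} agent that originally emitted that token-set as the starter,'' so that completed reactor-halves biject onto emitted starter-sets --- is both unproved and, as literally stated, false. If two agents in the same simulated state $q$ both go {\sf pending}, they emit two indistinguishable runs $\langle q,1\rangle,\dots,\langle q,o+1\rangle$, and a reactor can assemble a ``complete set'' by mixing tokens from both runs (and jokers standing in for tokens of either). There is no well-defined physical emitter of the consumed set. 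You flag exactly this indistinguishability as ``the main obstacle,'' but you leave it at ``one must show \dots a bijection,'' which is precisely the part that needs an argument. The paper does \emph{not} prove provenance-level uniqueness; instead it exploits anonymity: whenever a run for $q_s$ is consumed by $a_r$ at time $t$, \emph{some} agent entered {\sf pending} in state $q_s$ at an earlier time, and if the ``wrong'' agent ends up consuming the state-change run $\langle(q_s,\cdot),i\rangle$, the roles of the interchangeable agents in state $q_s$ are swapped in the matching (including the degenerate case where $a_r$ consumes its own state-change run, which requires $\delta(q_s,q_r)[1]=q_s$ and is handled separately). This exchange argument is what makes the matching consistent, and it is absent from your proposal.

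A second, smaller omission: you defer all of liveness to the preceding lemma, but that lemma only addresses the {\sf GF} condition of the derived execution; the theorem's proof must first establish that simulated transitions occur at all, i.e., that complete runs cannot be destroyed by omissions. The paper argues this by counting: a run has $o+1$ tokens and at most $o$ omissions can occur, and every omitted token is compensated by a freshly generated joker (with the Rummy-style reclamation preventing the joker count from drifting), so a consumable run for some state always survives scattered among the agents and can be routed to an {\sf available} reactor. Your proposal gestures at ``none is lost'' but never makes this $o+1$ versus $o$ counting argument, which is the reason the knowledge of the bound $o$ is needed in the first place.
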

  \begin{proof}
   The proof uses model {\sf I}$_3$, the correctness for model {\sf I}$_4$ follows from symmetry consideration.
  We first show that an agents sets $state_{\mathcal{P}}=\delta(q,state_{\mathcal{P}})[1]$  infinitely many times. Let us first consider the case where an agent $a_{r}$ with $state_{sim}={\sf available}$ exists, if there exists an agent $a_{s} \neq a_{r}$ with state $state_{sim}={\sf available}$ it is easy to see that after $o+1$ interactions  $(a_s,a_r)$ , despite the presence of omissions, agent $a_{r}$ will have a run for the state $q_s$ of $a_s$ in $sending$, therefore it executes  $state_{\mathcal{P}}=\delta(q_s,state_{\mathcal{P}})[1]$. In case $a_s$ is in ${\sf pending}$ and there is no token $<(q_s,*),*>$, we show that there exists a run for a state $a_s$ scattered among agents. 
  
We claim that once a run is created it disappears from the system only if it is consumed by an agent.  Suppose that there is no token $<q_s,*>$ this implies that each time an agent was trying to transmit a toke for state $q_s$ an omission occurred, but there are $o+1$ such tokens and at most $o$ omissions, therefore this is impossible. 
Now let us suppose that the number of jokers and tokens $<q_s,*>$ is less than $o+1$: each time an omission occurred when a token $<q_s,*>$ of a specific run was sent the receiver generated a joker, let $T$ be the set of these jokers, we have that $|T|$ and the number of tokens $<q_s,*>$ is at least $o+1$. Note that if one token in $T$ was used by an agent then there exists an agent $a_1$ with a token $<q_x,l> \in Jokers$ and either: (1) there exists a token $<q_x,l>$ on agent $a_2$, therefore after a finite number of interactions $(a_2,a_1)$ we have that agent $a_1$ will put a joker in $sending$; or (2) if token $<q_x,l>$ does not exists then it was lost during an interaction but a corresponding joker was generated. 

Therefore the run for state $q_s$ exists and thus it exists a sequence of interactions that move this run, or another, in the $sending$ buffer of $q_r$, therefore $a_{r}$ can execute $state_{\mathcal{P}}=\delta(q_s,state_{\mathcal{P}})[1]$.

If there are only tokens $<(q_s,*),*>$ then by using the previous reasoning we can show that there is a run of such tokens, from this point on we boil down to a previous case. 
A similar argument shows that  if there are both tokens,  $<q_s,*>$ and $<(q_s,*),*>$ there is a run for at least one of them. 

The only case left is if there is no agent with $state_{sim}={\sf available}$, in this case we show that it must exists an agent $a_{r}$ with $state_{{\cal P}}=q_r$ and a run of tokens $<q_r,*>$ or $<(q_r,*),*>$.
Let us assume the contrary, when an agent switches to state ${\sf pending}$ it inserts a run of tokens $<q_r,*>$  in $sending$. This run can only disappears if is consumed by an agent, but in this case a run $<(q_r,*),*>$ is created,
now if this run is consumed by another agent $a_x$ with state $q_r$ this implies that the run $<q_r,*>$ of $a_x$ is still in the system. 

Therefore there always exists a sequence of interactions that bring $a_r$ to state ${\sf available}$.

Now we have to show that each time an agent $a_r$ sets $state_{\mathcal{P}}=\delta(q_s,state_{\mathcal{P}})[1]$, it can be paired consistently in the matching. When, $a_r$ changes state, at time $t$ it consumes a run $<q_s,*>$,
this implies that there exists an instant $t' < t$ where an agent $a_s$ has generated the run $<q_s,*>$ entering in state $pending$. Now $a_s$ could exit from $pending$ only if it consumes a run  $<q_s,*>, <(q_s,*),*>$ if is the run generated by $a_r$ we have the edge of the matching, otherwise this implies that there exist another agent $a_b$ with state $q_s$ that generated a run $<q_s,*>$ and consumed the run of $ <(q_s,*),*>$ of $a_r$ at time $t''$, for the moment let us assume $b \neq r$, if $a_s$ was pending at time $t''$, then, being agent anonymous, we can switch the role of $a_s$ and $a_r$ and match the two agents. Otherwise if $a_s$ was not pending, we have that when $a_s$ exited from state ${\sf pending}$ there must exists another agent $a_x$ with state $q_s$ in pending, therefore we can switch the role of $a_s$ and $a_x$. Let us now study the case where $b=r$, in this case we have $\delta(q_s,state_{\mathcal{P}})[1]=q_s$ thus when $a_r$ consumes the run $<(q_s,q_r),*>$ generated by itself it enters in state $\delta(q_s,state_{\mathcal{P}})[0]=q_x$, therefore we can match $a_s,a_r$, switching the role of $a_s$ and $a_r$, since it is as they transitioned from $(q_s,q_r)$ to  $(q_x,q_s)$.
So we have shown that under {\sf GF} the matching is not empty and contains infinitely many pairs.

Finally, the derived execution is {\sf GF} due to Lemma~2.
\end{proof}

  By applying this theorem to a system without omissions (i.e., plugging $o=0$), we have:
\setcounter{theorem}{0}
\begin{corollary}Given ${\Theta}(|Q_{\cal P}|\log n)$ bits of memory on each agent, every {\sf TW} protocol can be simulated in {\sf IT}.\qed\label{obs:1}
\end{corollary}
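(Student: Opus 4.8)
The plan is to obtain Corollary~\ref{obs:1} as a direct specialization of Theorem~\ref{th2:simO} to the omission-free setting. First I would observe that the {\sf IT} model is exactly the {\sf I}$_3$ model (or {\sf I}$_4$) in which the adversary never inserts an omission; equivalently, it is the sub-model of {\sf I}$_3$ obtained by forcing $O(I)=0$ on every run. Since $\mathcal{S}_{KnO}$ is proved correct under model {\sf I}$_3$ for \emph{any} run with $O(I)\le o$, in particular it is correct on all runs with $O(I)=0$; so it also simulates every {\sf TW} protocol when deployed on {\sf IT}. The only thing that changes is the memory accounting: plugging the upper bound $o=0$ into the bound ${\Theta}(\log n\,|Q_{\cal P}|(o+1))$ of Theorem~\ref{th2:simO} yields ${\Theta}(|Q_{\cal P}|\log n)$ bits per agent, which is the claimed overhead.

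Concretely, the argument has three short steps. \textbf{Step 1:} note that with $o=0$ each simulated state $q$ is represented by a single token $\langle q,1\rangle$, no joker tokens are ever generated (a reactor only creates a joker upon detecting an omission, which never happens), the $Jokers$ multiset stays empty, and the ``Rummy'' reconciliation mechanism is vacuous; hence the protocol $\mathcal{S}_{KnO}$ degenerates to a clean token-passing simulator that never needs wildcards. \textbf{Step 2:} invoke Theorem~\ref{th2:simO} verbatim with this value of $o$: its hypotheses (model {\sf I}$_3$, i.e.\ {\sf IT} with no omissions, and the stated memory) are met, so the derived execution is {\sf GF} by the preceding lemma and the perfect matching it constructs is infinite, giving a bona fide simulation per Definition~\ref{def:simulator}. \textbf{Step 3:} substitute $o=0$ into the memory bound to get ${\Theta}(|Q_{\cal P}|\log n)$.

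I do not expect any real obstacle here, since this is a corollary by substitution; the only thing worth a sentence is to make explicit that {\sf IT} is the $O(I)=0$ restriction of {\sf I}$_3$ (so that Theorem~\ref{th2:simO}'s guarantee, which quantifies over all runs with at most $o$ omissions, indeed covers every {\sf IT} run), and that the counter in each token ranges over $\{1,\dots,o+1\}=\{1\}$, so the per-state overhead collapses from ${\Theta}(\log n\,(o+1))$ to ${\Theta}(\log n)$ bits — the $\log n$ factor surviving because tokens and the simulator still need to carry agent-state identifiers of size $\log|Q_{\cal P}|$ together with the bookkeeping fields whose width is $\Theta(\log n)$. This completes the proof.
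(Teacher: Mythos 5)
Your proposal is correct and matches the paper's own (one-line) argument exactly: the paper derives Corollary~\ref{obs:1} by ``applying this theorem to a system without omissions (i.e., plugging $o=0$)'', which is precisely your specialization of Theorem~\ref{th2:simO} with the memory bound collapsing to ${\Theta}(|Q_{\cal P}|\log n)$. Your extra remarks about the degenerate token/joker machinery and the identification of {\sf IT} with the $O(I)=0$ restriction of {\sf I}$_3$ are just helpful elaborations of the same route.
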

\setcounter{theorem}{4}

\subsection{Unique IDs and {\sf IO}: Simulator $\mathcal{S}_{ID}$. \label{idsimulator}}
Now we assume that the agents have unique IDs as part of their initial state, and we give a {\sf TW} simulator for the {\sf IO} model, named $\mathcal{S}_{ID}$, which is reported in Figure~\ref{id2:algorithm1}. The idea is to use the uniqueness of the IDs to implement a locking mechanism that ensures the consistent matching of simulated state changes. Essentially, at a certain point an agent commits itself to executing a transition only with another agent with a specific ID. The locking scheme contains a rollback procedure to avoid deadlocks. 
 \smallskip

  \noindent{\bf Simulator Variables.}  
  Each agent has the following variables: $my\_id$ for its own ID, $state_{sim}={\sf avaible}$ for the state of the simulator protocol, and $state_{\mathcal{P}}$ for the state of the simulated protocol. Moreover, it keeps two variables, $id_{other}$ and $state_{other}$, which are the ID and the state of the other agent in the simulated two-way interaction. 

\smallskip

  \noindent  {\bf Simulator Protocol.} When an {\sf available} reactor $a_{r}$, with ID ${r}$ and  simulated state  $state_{\mathcal{P}}=q_r$, observes a starter $a_{s}$ with ID $s$ and $state^{s}_{sim}={\sf available}$, it enters a {\sf pairing} state. Moreover, it saves the ID $s$  in $id_{other}$ and the simulated state $state^{s}_{\mathcal{P}}=q_s$ of $a_{s}$ in $state_{other}$ (see the details at Lines~\ref{id2:pair1}--\ref{id2:pair2}). The {\sf pairing} state could be seen as a \vir{soft} commitment in which a reactor picks a specific agent as a possible partner for a two-way interaction. In some specific conditions, an agent in the {\sf pairing} state can \vir{roll back} to  the {\sf available} state without completing a simulated two-way interaction; this will be covered later. 
   
The simulation proceeds as soon as $a_{s}$, which is {\sf available}, receives the information that some other agent $a_{r}$ is in the {\sf pairing} state and  wants to pair up with an agent that has $my\_id=s$ and  simulated state $q_s$. In this case $a_{s}$ sets its simulator state to {\sf locked}, stores $a_r$'s simulated state and ID, and executes the transition $\delta_{\cal P}(state_{\mathcal{P}},state_{other}=q_r)[0]=f_s(state_{\mathcal{P}},state_{other})$. We remark that this happens only if the current simulated state of $a_s$ is equal to the variable $state_{other}$ of $a_r$ (see Line~\ref{id2:iflock}).

Suppose that $a_{s}$ is {\sf locked}; if $a_{r}$ observes $a_{s}$,  it executes the transition $\delta_{\cal P}(state^{s}_{\mathcal{P}},state_{\mathcal{P}})[1]=f_r(state^{s}_{\mathcal{P}},state_{\mathcal{P}})$, becomes {\sf available}, and resets the variable $id_{other}$ (see Lines~\ref{id2:ifotherlocked}--\ref{id2:ifotherend}). Now, if $a_{s}$ is {\sf locked} and observes that $a_{r}$'s variable $id_{other}$ is not $s$, then it resets its own state to {\sf available} (see Lines~\ref{id2:iflockevolution}--\ref{id2:endlock}).

  It may happen, due to the {\sf IO} model's nature, that $a_{s}$, with variable $state_{\mathcal{P}}=q_s$, induces an agent $a_{r}$ to enter state {\sf pairing}, but then $a_{s}$ starts a two-way simulation with a different agent. In order to prevent $a_r$ from waiting forever, we make it reset the pending transition if it encounters $a_{s}$ again with  $id_{other}^{s} \neq my\_id$ (this is incorporated in Lines~\ref{id2:iflockevolution}--\ref{id2:endlock}).

\begin{theorem}
Assuming {\sf IO} and unique IDs,
$\mathcal{S}_{ID}$  is a {\sf TW} simulator.
\label{th2:simid1}
\end{theorem}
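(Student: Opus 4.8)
# Proof Plan for Theorem~\ref{th2:simid1}

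The plan is to verify the two requirements of Definition~\ref{def:simulator}: (i) for every {\sf GF} execution $\Gamma_I(C_0)$ of $\mathcal{S}_{ID}(\mathcal{P})$, exhibit a sequence of events $E(\Gamma)$ with a perfect matching $M(E)$ whose derived execution is a {\sf GF} execution of $\mathcal{P}$; and (ii) show every finite prefix of interactions extends to an infinite {\sf GF} run (which is immediate here, since $\mathcal{S}_{ID}$ uses finitely many states once $|Q_\mathcal{P}|$ and the ID set are fixed, so {\sf GF} executions always exist). The heart of the argument is (i), and I would split it into a \emph{safety} part (every simulated state change can be consistently matched) and a \emph{liveness} part (infinitely many simulated state changes occur, and the derived execution is {\sf GF}).

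For the safety part, the key structural invariant to establish is that the {\sf locked}/{\sf pairing} mechanism pairs up state changes correctly. Concretely: whenever an agent $a_s$ transitions to {\sf locked} and applies $f_s(state_\mathcal{P}, q_r)$ at Lines~\ref{id2:iflock}--\ref{id2:chst2}, it does so because it observed $a_r$ in state {\sf pairing} with $id^s_{other} = s$, $state^s_{other} = q_s$ equal to $a_s$'s \emph{current} simulated state. I would argue that from that moment, $a_r$ can only leave {\sf pairing} in one of two ways: by observing $a_s$ {\sf locked} with $id^s_{other}=r$ still matching (Lines~\ref{id2:ifotherlocked}--\ref{id2:ifotherend}), in which case $a_r$ applies $f_r(q_s, q_r)$ and the pair $(a_s,a_r)$ has jointly realized $\delta_\mathcal{P}(q_s,q_r)$; or by a rollback (Lines~\ref{id2:iflockevolution}--\ref{id2:endlock}) in which $a_r$ has \emph{not} yet changed simulated state. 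Crucially one must check that $a_s$, while {\sf locked}, cannot change its simulated state again before $a_r$ reads it — inspecting the code, a {\sf locked} agent only ever transitions back to {\sf available} (Lines~\ref{id2:iflockevolution}--\ref{id2:endlock}) without touching $state_\mathcal{P}$ — and that the uniqueness of IDs guarantees no \emph{third} agent can be fooled into believing it is the partner. Each completed (starter-lock, reactor-unlock) pair then contributes two events to $E(\Gamma)$, and these are matched in $M(E)$; events where a rollback occurred contribute nothing. This yields a well-defined partial matching; the extra-events clause of the event definition lets us pad $E(\Gamma)$ if needed so that $M(E)$ is a genuine partition of $\mathbb{N}$.

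For the liveness part, I would show that in any {\sf GF} execution the simulator cannot get permanently stuck: if the derived configuration is one from which a two-way interaction $(a_s,a_r)$ is enabled in $\mathcal{P}$, then there is a finite sequence of {\sf IO} interactions that (a) rolls back any stale {\sf pairing}/{\sf locked} states involving $a_s$ or $a_r$ via Lines~\ref{id2:iflockevolution}--\ref{id2:endlock}, (b) has $a_r$ observe $a_s$ to enter {\sf pairing}, (c) has $a_s$ observe $a_r$ to become {\sf locked} and apply $f_s$, and (d) has $a_r$ observe $a_s$ again to apply $f_r$ and return to {\sf available}. This finite witnessing sequence, combined with the set-based {\sf GF} hypothesis applied to the closed sets $\widetilde{\mathcal C}$ (simulator configurations projecting into $\mathcal C$) and $\widetilde{\mathcal C'}$, forces the derived execution through $\mathcal C'$ infinitely often — exactly the reasoning used in Lemma~2 for $\mathcal{S}_{KnO}$. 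Hence the derived execution is {\sf GF} and in particular infinitely many simulated transitions occur, so $E(\Gamma)$ is infinite and the matching nonempty and total.

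The main obstacle I expect is ruling out the subtle interference scenarios the {\sf IO} model permits: because the starter is completely oblivious to interactions, an agent $a_s$ can induce \emph{several} different reactors into {\sf pairing} (all recording the same $state_{other}=q_s$), and then lock with one of them — or even change its simulated state and later induce yet more reactors with a stale-looking but coincidentally-equal state value. The rollback rule at Lines~\ref{id2:iflockevolution}--\ref{id2:endlock} is what rescues the other reactors, but proving that \emph{no} reactor ever completes a spurious unlock — i.e., that when $a_r$ fires Lines~\ref{id2:ifotherlocked}--\ref{id2:ifotherend} the observed {\sf locked} agent $a_s$ genuinely locked \emph{against $a_r$} and has not since changed $state_\mathcal{P}$ — requires a careful case analysis on the possible interleavings, leaning on ID-uniqueness (so $id_{other}=s$ and $id^s_{other}=r$ together pin down a unique ordered pair) and on the observation that the only simulated-state-changing lines for a locked agent are unreachable until it is unlocked. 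I would formalize this as an invariant maintained along every execution and prove it by induction on the interaction index.
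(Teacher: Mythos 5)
Your plan follows essentially the same route as the paper's proof: the same safety argument (a {\sf locked} starter cannot change its simulated state again before its {\sf pairing} partner completes or is rolled back, with ID uniqueness pinning down the ordered pair), the same liveness argument (no permanent deadlock, then the closed-set {\sf GF} condition applied to $\widetilde{\mathcal C}$ and $\widetilde{\mathcal C'}$ via a constant-length witnessing sequence that rolls back or completes stale {\sf pairing}/{\sf locked} states). The only minor difference is that the paper reaches the "some agent eventually locks" conclusion via an explicit chain-of-pairing-agents argument and spells out the case analysis on the partners' states, but your outline captures the same content.
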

\begin{proof}
Let us consider the simulation of a generic two-way protocol ${\cal P}$. Assume that an agent $a_0$ becomes {\sf pairing} upon observing an agent $a_1$. Later, $a_1$ can either become {\sf locked} with $a_0$, or {\sf pairing} as well, upon observing some other agent $a_2$. It is clear that, if such a ``chain'' of {\sf pairing} agents is formed, it must stop eventually. The last agent in the chain, say $a_k$, will then have to become {\sf locked} upon observing some {\sf pairing} agent with $id_{other}$ equal to $a_k$'s ID (which will eventually happen due to the {\sf GF} condition).

Now, whenever an agent $a_s$ enters state {\sf locked} after observing an agent $a_r$ in state {\sf pairing}, it changes its simulated state according to $\delta_\mathcal P$, say at time $t_s$, and sooner or later also $a_r$ will do the same, say at time $t_r$, with $t_r>t_s$. This is because $a_r$ cannot start a new interaction with $a_s$ between times $t_s$ and $t_r$ (since $a_s$ would have to be in state {\sf available}), and hence it will necessarily be seen by $a_s$ with $id_{other}\neq s$, due to the {\sf GF} condition. Moreover, $a_s$ cannot change its own simulated state after $t_s$ and before $t_r$, because it is {\sf locked}.

We have proved that infinitely many simulated state transitions must occur; these events can easily be paired up into a consistent perfect matching. We only have to prove that the derived execution satisfies the {\sf GF} condition. We will do it in the case in which the system consists of $n\geq 3$ agents; the proof for the case $n=2$ is simpler, and we omit it. Let $\mathcal C$ and $\mathcal C'$ be closed sets of configurations of $\mathcal P$, such that every configuration of $\mathcal C$ can become one of $\mathcal C'$ after a two-way interaction, and suppose that the derived execution passes through $\mathcal C$ infinitely many times. Let $\widetilde{\mathcal C}$ be the set of configurations of the simulator protocol whose simulated states are in $\mathcal C$ and let $\widetilde{\mathcal C'}$ be constructed similarly from $\mathcal C'$. (Note: if a configuration of the simulator protocol contains a {\sf locked} agent $a_s$, the simulated state of its partner $a_r$ is assumed to be the state it would reach after the interaction with $a_s$. This agrees with the definition of derived run given in Section~\ref{s:simulation}.) By assumption, the simulation passes through $\widetilde{\mathcal C}$ infinitely often; we claim that it must go through $\widetilde{\mathcal C'}$ infinitely many times, as well. By definition of $\mathcal C$, for every $C_j\in \widetilde{\mathcal C}$, there is an interaction in $\mathcal P$ between two agents $a_s$ and $a_r$ that maps $\pi_\mathcal P(C_j)$ into $\pi_\mathcal P(C'_j)$, where $C'_j\in \widetilde{\mathcal C'}$. We will prove that such a $C'_j$ can be reached from $C_j$ after at most a constant number of interactions. 
\begin{itemize}
\item If $a_s$ is {\sf available} in $C_j$ and $a_r$ is either {\sf available} or {\sf pairing} with $a_s$, then $C'_j$ can be obtained by simply letting $a_s$ and $a_r$ interact together multiple times until they perform a full simulated interaction, and their states transition according to $\delta_\mathcal P$.
\item If $a_s$ or $a_r$ (perhaps both) is {\sf locked} in $C_j$, we let it interact with its current partner until the simulated interaction is completed and its internal state is again {\sf available}. Then we proceed as in the other cases.
\item If $a_s$ is {\sf pairing} in $C_j$ or $a_r$ is {\sf pairing} with an agent that is not $a_s$, we have to make it become {\sf available} without performing a full two-way interaction, and then we can proceed as in the other cases. Suppose that $a_s$ is {\sf pairing} (the case with $a_r$ is handled similarly), and let $a_q$ be the agent with which $a_s$ is paired (perhaps $a_q=a_r$).
\begin{itemize}
\item If $a_q$ is {\sf pairing} in $C_j$ (of course not with $a_s$), then we let $a_s$ observe $a_q$ and roll back to the {\sf available} state.
\item If $a_q$ is {\sf available} in $C_j$, we let it pair up with some other {\sf available} agent (possibly $a_r$), and the we proceed as in the previous case. If an {\sf available} agent does not exist, we can create one by letting some {\sf pairing} agent roll back or some {\sf locked} agent complete its current interaction, as explained in the first paragraph of the proof.
\item If $a_q$ is {\sf locked} in $C_j$, we let it finish the simulated interaction and become {\sf available}. If it was {\sf locked} with $a_s$, we are finished because now $a_s$ is {\sf available} too. Otherwise, we proceed as in the previous case.
\end{itemize}
\end{itemize}
As already observed, $C'_j$ can be reached from $C_j$ after at most a constant number $c$ of interactions, and this holds for every $j$. By applying the definition of {\sf GF} to the simulator's execution $c$ times, we have that $C'_j$ is indeed reached for infinitely many $j$'s. Therefore $\widetilde{\mathcal C'}$ is reached infinitely many times, and thus so is $\mathcal C'$ by the derived execution.
\end{proof}
\subsection{Simulating with knowledge of $n$ \label{nsimulator}}

We give the following additional result on simulating when additional knowledge is available to the agents. The protocol uses a naming algorithm  ${\cal N}_{n}$ in conjunction with ${\cal S}_{ID}$. 

\subsubsection*{ Naming Algorithm: ${\cal N}_{n}$} The following naming protocol, ${\cal N}_{n}$,  uses the knowledge of $n$. This naming protocol is similar to the threshold protocol for {\sf IO} presented in \cite{oneway}.
 \\ \noindent{\bf Protocol Variables.}  
Each agent $a_r$ has variables $my\_id=1$,$max\_id=1$, moreover it can access simulator ${\cal S}_{ID}$ by function ${\sf start\_sim}(id)$ that takes as input an unique id.
 \\ \noindent{\bf Simulator Protocol.}  
 When an agent $a_r$ is the responder of an interaction and the starter has its same value for $my\_id$ it increments the value of its variable $my\_id$. Similarly, variable $max\_id$ is updated to reflect the maximum value seen for variable $my\_id$, $a_r$ updates the value $max\_id$ to the maximum value between the values $my\_id,max\_id$ of the initiator and its variables. When value $max\_id=n$  the agent invokes ${\sf start\_sim}(max\_id)$.

\begin{lemma}
Let us consider a set of agents $A$ running ${\cal N}_{\diamond n}$ algorithm. The system fairness is {\sf GF} and the interaction model is {\sf IO}. The maximum value $M$ for $max\_id$ increases if and only if there are two agents that share the same value for $my\_id$. When $M=|A|$ each agent has an unique stable value for $my\_id$. 
\label{lemma:eventnaming}
\end{lemma}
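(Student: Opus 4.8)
Looking at this lemma about the naming algorithm ${\cal N}_n$ (written as ${\cal N}_{\diamond n}$ in the statement), I need to prove three claims: (1) the maximum value $M$ of $max\_id$ increases iff two agents share the same $my\_id$ value, (2) when $M = |A|$ each agent has a unique stable $my\_id$, and global fairness holds.

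\textbf{Proof proposal.}

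The plan is to track two monotone quantities and exploit global fairness. First I would establish that both $my\_id$ (for each individual agent) and $M = \max_r max\_id$ (system-wide) are non-decreasing over the execution: the only updates in the protocol either increment $my\_id$ by one (when a responder sees a starter with the same $my\_id$) or raise $max\_id$ to a seen maximum, so neither can ever drop. I would also note the invariant that $max\_id \geq my\_id$ always holds for each agent, since whenever $my\_id$ is incremented to value $v$, the agent has just witnessed another agent with $my\_id = v-1$, and in the same transition its $max\_id$ is set to at least the maximum of the two $my\_id$ values and both $max\_id$ values — so after incrementing, its own $max\_id$ is at least $v$. (I should double-check the order of operations in the transition rule here; this is a place where the informal protocol description needs care.)

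For the first claim, the "if" direction: suppose two agents $a$ and $b$ share $my\_id = v$ at some configuration, and suppose $M = v'$ at that point. If $v < v'$, then $M$ already exceeds what a collision at level $v$ would produce, so I instead argue about collisions at the \emph{top} level. The right statement to prove is: $M$ strictly increases over the course of the whole execution precisely as long as collisions keep occurring at the current maximum level. Concretely, if at some point two agents both have $my\_id$ equal to the current value of $M$, then by {\sf GF} these two agents eventually interact with one as starter and one as responder (the set of configurations where both hold $my\_id = M$ is closed and leads, via a single interaction, to a configuration where one of them has $my\_id = M+1$, hence $M$ has increased). For the "only if" direction: $M$ can only increase when some agent's $max\_id$ is set to a value larger than the old $M$; inspecting the transition, $max\_id$ is only ever set to the max of the two agents' $my\_id$ and $max\_id$ values, so a new maximum $M+1$ can only appear if some agent already had $my\_id = M+1$ or $max\_id = M+1$. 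Tracing the first moment such a value $M+1$ appears anywhere in the system, it must be via a $my\_id$ increment, which requires two agents with $my\_id = M$ — a collision. Thus $M$ increases if and only if there is (eventually, after the triggering collision propagates) a collision at the top level.

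For the second claim and the fairness conclusion: since $my\_id$ values lie in $\{1,\dots,n\}$ (I must argue $my\_id$ never exceeds $n$ — this follows because if all $n$ agents had distinct values they would be a permutation of $\{1,\dots,n\}$, and more carefully because a collision is needed to push any value up, a standard counting/potential argument bounds the top value by $n$), the monotone quantity $M$ stabilizes at some value $M^\ast \leq n$. Once $M$ has stabilized, by the contrapositive of the first claim there are no more collisions at the top level; I then argue (again by {\sf GF}, propagating $max\_id$ downward and the absence of collisions) that eventually \emph{no} collisions occur at all and all $my\_id$ values freeze. If $M^\ast = n$, a counting argument forces all $n$ frozen values to be distinct: the sum of $my\_id$ values is non-decreasing and, with no collisions and a top value of $n$ among $n$ agents in $\{1,\dots,n\}$, the values must be exactly $1,\dots,n$. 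Finally, global fairness of the induced execution is inherited directly: ${\cal N}_n$ adds no new nondeterminism and imposes no restriction on which interactions occur, and once naming stabilizes it simply invokes ${\sf start\_sim}(n)$, after which the reasoning reduces to the {\sf GF} analysis already done for ${\cal S}_{ID}$ in Theorem~\ref{th2:simid1}.

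\textbf{Main obstacle.} The delicate point is the "only if"/stabilization argument: showing that $M$ increasing \emph{requires} a collision, and conversely that once $M$ is frozen all naming activity ceases, so that a unique stable assignment is actually reached rather than just collision-free churn. This needs a careful invariant relating $my\_id$, $max\_id$, and the current maximum, plus a clean application of the set-based {\sf GF} definition to the closed set of "collision configurations at the top level." The $my\_id \le n$ bound, while intuitively a pigeonhole fact, also needs a genuine potential argument because agents are anonymous and updates are local.
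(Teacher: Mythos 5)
Most of your plan lines up with the paper's own argument: monotonicity of $my\_id$ and $M$, the \emph{only if} direction by tracing the first appearance of the value $M+1$ back to a $my\_id$ increment (hence a collision), the application of {\sf GF} to the closed set of configurations containing a top-level collision, and a stabilization-plus-counting argument for the second claim. The part about inheriting global fairness of the induced execution is not actually part of this lemma (it belongs to the proof of Theorem~\ref{th2:simN}), but it is harmless.

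The genuine gap is in the \emph{if} direction for collisions strictly below the top level. The lemma asserts that \emph{any} two agents sharing a $my\_id$ value force $M$ to increase; you explicitly replace this with the weaker statement that only top-level collisions do, and your closing remark that a lower collision \vir{propagates} to the top is asserted, not proved. The paper closes exactly this case with an upward-propagation argument resting on an occupancy invariant that you never state: every value from $1$ up to the current maximum $my\_id$ is held by at least one agent at all times, because a level can only be vacated by an agent incrementing, which requires a second agent at that same level, so the last occupant of each level is stuck there forever. With this invariant, a collision at level $M'<M$ promotes one agent to $M'+1$, where it immediately meets the standing occupant of level $M'+1$; iterating reaches level $M$, and {\sf GF} then forces $M$ to increase. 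Note that this same invariant is precisely what you are missing in the two other places you flag as delicate: it gives $my\_id\le n$ non-circularly (if level $n+1$ were ever reached, all of levels $1,\dots,n+1$ would be simultaneously occupied by $n$ agents), and it gives uniqueness when $M=n$ by pigeonhole ($n$ occupied levels, $n$ agents, one per level, and the sole occupant of each level can never move again). So the proposal is repairable, but as written it proves a weaker biconditional and leaves its two auxiliary claims resting on an unproved counting fact.
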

\begin{proof}

If the maximum value $M$ for $max\_id$ increased there exists an agent $a$ that at the end of an interaction has $my\_id > max\_id=M$, but $a$ increases $my\_id$ only when it is the responder of an interaction with an agent $a'$ with same value for $my\_id=M$. For the other direction let us suppose that there are two agents $a,a'$ with same value for $my\_id$, if one of them has value $my\_id=M$ then for {\sf GF} they will eventually interacts and this increase $M$, if they have a value $my\_id=M'<M$ they will eventually interacts and one of them, let us suppose $a'$, will increase its value for $my\_id=M''=M'+1$. By definition of $M$ there must exist another agent $a''$ with $my\_id=M''$, thus we can iterate this reasoning until we have two agents with equal $my\_id=M$.
It remains to show that eventually all agents will assume an unique value for $my\_id$ in $[1,\ldots,|A|]$, but this is trivial by observing that given a set of agents with equal value for $my\_id=M'$ there will be interactions by them until only two agents with value $my\_id=M'$ remains. When this happen, they eventually interact and only one agent with $my\_id=M'$ remains, this agent never changes $my\_id$, the rest derives from the initial state $my\_id=1$ for all agents. 
\end{proof}

The correctness of this protocol derives immediately from Lemma~\ref{lemma:eventnaming} and Theorem~\ref{th2:simid1}. Thus, the following Theorem holds:

\setcounter{theorem}{5}
\begin{theorem}\label{th2:simN}
Assuming {\sf IO}, knowledge of $|A|=n$, and ${\Theta}(\log n)$ bits of memory, there exists a simulator for every {\sf TW} protocol.
\end{theorem}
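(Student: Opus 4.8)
The plan is to compose the naming protocol $\mathcal{N}_n$ with the simulator $\mathcal{S}_{ID}$ in a black-box fashion: first let the agents run $\mathcal{N}_n$ to acquire unique IDs, and then have each agent invoke $\mathcal{S}_{ID}$ using its stabilized ID. The proof is essentially a composition argument, so the main work is to check that the hypotheses of Theorem~\ref{th2:simid1} are eventually met and that no irrecoverable damage is done during the transient phase before the IDs stabilize.

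First I would invoke Lemma~\ref{lemma:eventnaming}: under any {\sf GF} execution in the {\sf IO} model, the maximum value $M$ of $max\_id$ is non-decreasing and increases exactly when two agents share a $my\_id$ value; moreover once $M = n = |A|$, every agent has a unique and henceforth-stable $my\_id \in \{1,\dots,n\}$. Because agents know $n$, each agent can detect the condition $max\_id = n$ locally and only then call $\mathsf{start\_sim}(max\_id)$; so no agent starts simulating with a non-final ID. I also need to argue that $M$ does reach $n$ in every {\sf GF} execution: so long as $M < n$ there must be two agents sharing a $my\_id$ (a pigeonhole/counting argument on the multiset of $my\_id$ values over $\{1,\dots,M\}$), and by {\sf GF} such a colliding pair interacts infinitely often, forcing a $my\_id$ increment and eventually an increase of $M$; iterating, $M$ cannot get stuck below $n$. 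Hence there is a finite prefix of the execution after which all IDs are fixed and distinct and all agents have entered the simulation phase.

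Next I would appeal to Theorem~\ref{th2:simid1}: from the point where all agents have unique, stable IDs and are running $\mathcal{S}_{ID}$, the tail of the execution is exactly an execution of $\mathcal{S}_{ID}$ with unique IDs in {\sf IO}, whose simulator variables have some initial values — but $\mathcal{S}_{ID}$'s correctness proof does not depend on those variables being in their ``factory'' initial state, only on the IDs being unique (the {\sf available}/{\sf pairing}/{\sf locked} machinery self-corrects via the rollback rules). So the simulated states of the agents at the moment all IDs stabilize serve as the initial configuration $\pi_\mathcal{P}(C_0)$ of the derived execution, and the derived execution from that point is a {\sf GF} execution of $\mathcal{P}$. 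The events before the simulation phase contribute nothing to the event sequence $E(\Gamma)$ (no simulated-state change happens before $\mathsf{start\_sim}$ is called, provided we design $\mathcal{N}_n$ so that the $state_\mathcal{P}$ variable is untouched during naming — which it is, since $\mathcal{N}_n$ only manipulates $my\_id$ and $max\_id$). Finally, the memory bound is $\Theta(\log n)$: the naming variables $my\_id, max\_id$ take $O(\log n)$ bits, and $\mathcal{S}_{ID}$ stores a constant number of IDs and simulated states, each $O(\log n)$ bits, so the overhead per simulated state is $\Theta(\log n)$.

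The main obstacle I anticipate is the seam between the two phases: I need to be careful that ``some agents are still naming while others have already started simulating'' does not break anything. Concretely, an agent that has called $\mathsf{start\_sim}$ may interact (as starter or reactor) with an agent still running $\mathcal{N}_n$; I must specify that, in this hybrid regime, the simulating agent treats the non-simulating one consistently (e.g., a non-started agent appears {\sf available} with some simulated state and never commits), and that once $M=n$ everybody is simulating so the transient is finite and the argument of Theorem~\ref{th2:simid1} applies verbatim to the tail. A secondary subtlety is re-deriving {\sf GF} of the derived execution: this is inherited from the $\mathcal{S}_{ID}$ analysis since that proof only uses {\sf GF} of the simulator's execution restricted to the simulation phase, and a {\sf GF} execution has a {\sf GF} tail. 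Once these seam issues are dispatched, the theorem follows immediately from Lemma~\ref{lemma:eventnaming} and Theorem~\ref{th2:simid1}.
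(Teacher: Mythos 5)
Your proposal is correct and follows exactly the paper's route: the paper proves Theorem~\ref{th2:simN} by composing the naming protocol $\mathcal{N}_n$ (via Lemma~\ref{lemma:eventnaming}) with the ID-based simulator $\mathcal{S}_{ID}$ (via Theorem~\ref{th2:simid1}), which is precisely your plan. In fact your treatment of the transient phase and the seam between naming and simulation is considerably more detailed than the paper's own one-sentence justification.
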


\section{Conclusion}
In this paper we have given a formal definition of two-way simulation in Population Protocols, and we identified several omission models.
On top of this framework, we have given several impossibility results, as well as two-way simulators.
Our results yield an almost comprehensive characterization, see Figure~\ref{id:algorwq2}.
The only gap left concerns the possibility of simulation in model {\sf T}$_2$ when an 
upper bound on the number of omissions is known. 
As future work we are going to investigate this gap and study models where a unique
leader agent is present. Our preliminary results in the latter direction show that the problem is far from trivial, and two-way simulation is still impossible in a wide set of models.

\bibliographystyle{plain}

\newpage

\end{document}